\documentclass[a4paper,twocolumn,superscriptaddress,10pt]{quantumarticle}
\pdfoutput=1

\usepackage{amsthm}
\usepackage{amssymb}
\usepackage{amsmath}
\usepackage{amsfonts}
\usepackage{physics}
\usepackage{graphicx}
\usepackage{dcolumn}
\usepackage{bm}
\usepackage[utf8]{inputenc}
\usepackage[table]{xcolor}
\usepackage[T1]{fontenc}
\usepackage[colorlinks=true,citecolor=teal,linkcolor=teal]{hyperref}
\hypersetup{allcolors={teal}}
\usepackage{tikz}
\usepackage{lipsum}
\usepackage{epsfig}
\usepackage{epstopdf}
\usepackage{mathtools}
\usepackage{braket}
\usepackage{hyperref}
\usepackage[capitalise]{cleveref}
\usepackage{amsthm}
\usepackage{enumitem}
\usepackage{dsfont}
\usepackage{bbold}
\usepackage{eurosym}
\usepackage{diagbox}
\usepackage{comment}
\usepackage{algorithm}
\usepackage{algpseudocode}
\usepackage{enumitem} 
    \newlist{sublist}{enumerate}{1}
    \setlist[sublist,1]{label=(\alph*)}


\crefname{section}{section}{sections}
\newcommand{\ba}{\begin{aligned}}
\newcommand{\ea}{\end{aligned}}

\newcommand{\bc}{\begin{center}}
\newcommand{\ec}{\end{center}}
\newcommand{\beq}{\begin{equation}}
\newcommand{\eeq}{\end{equation}}
\newcommand{\beqq}{\begin{equation*}}
\newcommand{\eeqq}{\end{equation*}}
\newcommand{\beqa}{\begin{align}}
\newcommand{\eeqa}{\end{align}}
\newcommand{\barr}{\begin{array}}
\newcommand{\earr}{\end{array}}
\newcommand{\bi}{\begin{itemize}}
\newcommand{\ei}{\end{itemize}}

\newcommand{\cH}{\mathcal H}

\theoremstyle{plain}
\newcounter{tho}
\newtheorem{theo}[tho]{Theorem}
\newtheorem{lemma}[tho]{Lemma}
\newtheorem{cor}[tho]{Corollary}

\newtheorem{defi}[tho]{Definition}

\setcounter{secnumdepth}{2}
\setcounter{tocdepth}{2}

\footnotetext{These authors contributed equally to this work}

\begin{document}

 \title{Witnessing quantum non-Gaussianity with a single quadrature}

\author{Clara Wassner$^*$}
\affiliation{Dahlem Center for Complex Quantum Systems, Freie Universit\"at, 14195 Berlin, Germany}

\author{Jack Davis$^*$}
\affiliation{QAT team, DIENS, \'Ecole Normale Sup\'erieure, PSL University, CNRS, INRIA, 45 rue d'Ulm, Paris 75005, France}

\author{Sacha Cerf}
\affiliation{QAT team, DIENS, \'Ecole Normale Sup\'erieure, PSL University, CNRS, INRIA, 45 rue d'Ulm, Paris 75005, France}

\author{Ulysse Chabaud}
\affiliation{QAT team, DIENS, \'Ecole Normale Sup\'erieure, PSL University, CNRS, INRIA, 45 rue d'Ulm, Paris 75005, France}

\author{Francesco Arzani}
\affiliation{QAT team, DIENS, \'Ecole Normale Sup\'erieure, PSL University, CNRS, INRIA, 45 rue d'Ulm, Paris 75005, France}

\date{2026-07-06}

\begin{abstract}
    Full reconstruction of quantum states from measurement samples is often a prohibitively complex task, both in terms of the experimental setup and the scaling of the sample size with the system.  
    This motivates the relatively easier task of certifying application-specific quantities using measurements that are not tomographically complete, i.e.\ that provide only partial information about the state related to the application of interest.
    Here, we focus on simplifying the measurements needed to certify non-Gaussianity in bosonic systems, a key resource related to quantum advantage in various information processing tasks. We show that the statistics of a single quadrature measurement, corresponding to standard homodyne detection in quantum optics, can witness arbitrary degrees of non-Gaussianity as quantified by stellar rank.  Our results are based on a version of Hudson's theorem for wavefunctions, proved in a companion paper \cite{zeros}, revealing that the zeros in a homodyne distribution are signatures of quantum non-Gaussianity and higher stellar ranks. The validity of our witnesses is supported by a technical result showing that sets of states with bounded energy and finite stellar rank are compact. We provide an analysis of sample complexity, noise robustness, and experimental prospects.
    Our work simplifies the setup required to detect quantum non-Gaussianity in bosonic quantum states.
\end{abstract}

\maketitle


\paragraph{Introduction.} Quantum information processing holds the promise of significant advantages over classical methods. Bosonic systems in particular have gained prominence in several areas such as quantum sensing, quantum communication, quantum cryptography, quantum simulation and quantum computing. These systems include photonics, superconducting cavities, motional degrees of freedom of trapped ions, and optomechanical oscillators.  The quantum states of bosonic systems can be divided into two categories depending on their measurement statistics: Gaussian and non-Gaussian.  Gaussian states are well-understood \cite{Lloyd2012}, however non-Gaussianity underlies many quantum advantages \cite{fiuravsek2002gaussian,eisert2002distilling,giedke2002characterization,wenger2003maximal,garcia2004proposal,adesso2009optimal,niset2009no,frigerio2025joint}, and quantum non-Gaussianity in particular (i.e.\ ruling out mixtures of Gaussian states) is a necessary resource for any quantum computational advantage of bosonic systems compared to classical computers \cite{bartlettCVGottesmanKnill,eisertMariPosWF}. In the following we use non-Gaussianity to mean quantum non-Gaussianity.

For both fundamental physics and the development of quantum technologies, one often wants to ascertain whether a quantum system contains the necessary \textit{resources} for beyond-classical effects.
This can be done using quantum tomography \cite{dariano2003quantum}, which for quantum states of light is often performed via measurements of the quadratures of the electromagnetic field using homodyne or double-homodyne detection~\cite{LEONHARDT199589,RevModPhysTomo}.
The number of required measurements however scales exponentially with the number of subsystems.
Moreover, even under an energy constraint, bosonic state tomography is exponentially harder than that of finite-dimensional systems in terms of how many samples are needed to achieve a fixed confidence as a function of the number of subsystems \cite{mele2024learningquantumstatescontinuous}. 

On the other hand, tomography is not always a necessary burden: in many cases the relevant quantum feature of a physical system may be witnessed directly without requiring an exhaustive reconstruction.  For bosonic systems, previous approaches to witnessing non-Gaussianity include Refs.~\cite{mari2011directly,fiuravsek2013witnessing, chabaud2021witnessing, filip2011detecting, genoni2013detecting, hughes2014quantum, Park_testing_in_phase_space_2015_PRL, Park_Lu_single_quad_2017, Happ_Efremov_Nha_Schleich_2018, Kuhn_Vogel_nonstandard_filters_2018, walschaers2021non, chabaud2021certification, Fiurasek_Lachman_Filip_2021, fiuravsek2022efficient, kalash2025certifying, fiuravsek2025certification, racz2025nonGaussianity_moments}, with some focusing on specific measures such as negativity of the Wigner function \cite{mari2011directly,fiuravsek2013witnessing,chabaud2021witnessing} or the stellar rank \cite{chabaud2021certification,walschaers2021non,Fiurasek_Lachman_Filip_2021,fiuravsek2022efficient}.  The latter provides an operational ranking of non-Gaussian states related to the number of photon-addition and subtraction operations required for state preparation \cite{chabaud_stellar_2020,chabaud2022holomorphic}, and which bounds the computational usefulness of bosonic quantum states~\cite{chabaud2023resources}.
A recurring theme in many previous approaches, as generally formulated in \cite{hughes2014quantum}, is to optimize a particular measurement setting over the subset of Gaussian states with bounded mean photon number.  Such approaches however often make use of measurement setups that are tomographically complete~\cite{genoni2013detecting, hughes2014quantum, chabaud2021certification}
or require the measurement itself to be non-Gaussian~\cite{filip2011detecting, chabaud2021witnessing}.  This multitude of techniques, while successful in their appropriate regimes, raises the important question: \textit{What is the simplest possible measurement setting that can witness non-Gaussianity?}  Here we take a different approach and show that non-Gaussianity can be witnessed via samples from \textit{a single quadrature measurement}, such as the position-like quadrature $\hat x$. 

From a theoretical standpoint, we build upon a Hudson theorem for the wavefunction, proved in a companion paper \cite{zeros}, which implies that the zeros of quadrature distributions are signatures of non-Gaussianity.
Here we exploit this conceptual insight to construct witnesses with concrete operational consequences.
In particular, we demonstrate that there exist states of arbitrarily high non-Gaussianity, as quantified by stellar rank, which can be distinguished by our witnesses from any state of smaller stellar rank with the same energy. The key ingredient is a proof that the sets of states with bounded energy and finite stellar rank are compact in the trace norm, which may be of independent interest.

From an experimental perspective, while full quantum homodyne tomography requires measuring all rotated quadratures,
our witnesses 
utilize only a single quadrature angle, corresponding to a fixed measurement setup.  We include an analysis of the sample complexity of our protocol, a discussion of experimental prospects, and numerically explore how utilizing more quadrature angles makes our witnesses more robust to noise.  In particular, if, due to experimental imperfections, data from a single quadrature is insufficient, we show that adding just a single additional quadrature dramatically increases robustness.  



\paragraph{Preliminaries.}
\label{sec:preliminaries} 

Throughout, we consider the Hilbert space $\mathcal H = L^2(\mathbb R)$ of a single bosonic mode. The quadrature operators are $\hat q_\theta = e^{-i \theta \hat n }\hat x e^{i \theta \hat n } = \hat x \, \cos \theta  + \hat p \, \sin \theta$ for all $\theta\in[0,2\pi)$, where $\hat x$, $\hat p$, $\hat n = \hat a^\dagger \hat a$ are the position, momentum, and number operators, respectively.  The Fock basis, consisting of eigenstates of the number operator, and the coherent state of amplitude $\alpha$ are labeled $\{\ket{n}\}_{n=0}^\infty$ and $\ket{\alpha} = e^{-|\alpha|^2/2}\sum_{n=0}^\infty \frac{\alpha^n}{\sqrt{n!}}\ket{n}$, respectively. The \textit{stellar rank} $r^\star(\ket{\psi}) \in \mathbb N_0 \cup \{\infty\}$ of a pure state $\ket{\psi} \in \mathcal H$ is the number of zeroes, counted with multiplicity, of the stellar function (Segal-Bargmann transform) $F^\star_{\ket{\psi}}(\alpha) := e^{|\alpha|^2/2}\bra{\alpha^*}\psi\rangle$, where $\alpha \in \mathbb C$ \cite{chabaud_stellar_2020}.  Any single-mode state $\ket{\psi_r}$ of stellar rank $r<\infty$ can be parameterized by $\ket{\psi_r} = \hat D(\alpha) \hat S(\chi)\sum_{n =0}^r c_n \ket{n}$, where $\hat S(\chi)= e^{\frac{1}{2}(\chi^* \hat a^2 - \chi  \hat a^{\dagger^2})} $ is the squeezing operator, $\hat D(\alpha)= e^{\alpha \hat a ^\dagger - \alpha^*\hat a}$ is the displacement operator, $\sum_{n=0}^r |c_n|^2=1$, and $\alpha,\chi,c_n\in\mathbb C$ \cite{lutkenhaus1995nonclassical, chabaud_stellar_2020}.  A pure state is Gaussian if and only if it has stellar rank zero, and such states are identified with the displaced squeezed coherent states \cite{Yuen_squeezed_1976, V_V_Dodonov_2002}.

We also quote a key result from \cite{zeros}, which shows that any quadrature wavefunction (i.e.\ the expansion coefficients over the eigenbasis $\{\ket{q}_{\hat q_\theta}\}_q$ of $\hat q_\theta$) can be extended to a holomorphic function on the whole complex plane, provided the state satisfies an energy bound:

\begin{theo}[Hudson theorem for the wavefunction \cite{zeros}]\label{th:Hudson}
    For a pure state $\ket\psi$, assume there exists an $s>1$ such that $\ket\psi$ satisfies the energy bound $\langle s^{\hat n}\rangle_\psi<+\infty$. Then $\ket\psi$ is non-Gaussian if and only if all of its quadrature wavefunctions have a zero over $\mathbb C$.
\end{theo}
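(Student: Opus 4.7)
I would prove the equivalence via its contrapositive: there exists a quadrature wavefunction with no zero on $\bC$ if and only if $\ket\psi$ is Gaussian. Applying the rotation $e^{-i\theta\hat n}$, which commutes with $\hat n$ and hence preserves both the energy bound and the class of Gaussian states, reduces both directions to analyzing the position-quadrature wavefunction $\psi(x)=\langle x|\psi\rangle$.

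\textbf{Easy direction.} If $\ket\psi=\hat D(\alpha)\hat S(\chi)\ket 0$ is Gaussian, its position wavefunction has the form $\psi(x)=N\,e^{Q(x)}$ with $Q$ a polynomial of degree at most $2$ and $\mathrm{Re}$ of its leading coefficient strictly negative (for $L^2$-normalizability). The entire extension $e^{Q(z)}$ has no zeros on $\bC$. Rotated quadratures of Gaussian states are themselves Gaussian, so this holds for every angle. The contrapositive yields one direction of the theorem: if all quadrature wavefunctions have at least one zero, then $\ket\psi$ cannot be Gaussian.

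\textbf{Hard direction.} Conversely, suppose $\psi(z)$ has no zero on $\bC$; I claim $\ket\psi$ is then Gaussian. The bound $\langle s^{\hat n}\rangle_\psi<+\infty$ for some $s>1$ forces $\sum_n s^n|\psi_n|^2<\infty$ where $\psi_n=\langle n|\psi\rangle$, and in particular $|\psi_n|\le K\,s^{-n/2}$. Combined with standard complex-variable Hermite estimates of the form $|\phi_n(z)|\lesssim P(n)\,e^{b|z|^2}$, with $P$ polynomial in $n$ and $b$ independent of $n$, the expansion $\psi(z)=\sum_n\psi_n\phi_n(z)$ converges absolutely on $\bC$ and defines an entire function satisfying $|\psi(z)|\le C\,e^{a|z|^2}$. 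Hence $\psi(z)$ is entire of order at most $2$. Being nowhere zero, Hadamard's factorization theorem gives $\psi(z)=e^{P(z)}$ with $\deg P\le 2$. Restricting to real $x$ and imposing $\psi\in L^2(\bR)$ forces $P$ to be a quadratic whose leading coefficient has negative real part, so $\psi$ is precisely the wavefunction of a Gaussian pure state.

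\textbf{Main obstacle.} The crux is establishing that $\psi(z)$ is entire of order at most $2$: the Hermite functions inherit the growth $|e^{-z^2/2}|=e^{(y^2-x^2)/2}$ along the imaginary axis, and one must control this uniformly in $n$ while matching it against the decay of $|\psi_n|$ dictated by the energy bound. This is the content of the analyticity/growth result of the companion paper \cite{zeros}. Once the order bound is in place, the remainder is a one-line invocation of classical entire-function theory, and the reduction via phase-space rotation handles arbitrary quadrature angles uniformly.
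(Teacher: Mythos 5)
This theorem is not proved in the present paper at all: it is imported verbatim from the companion paper \cite{zeros}, so there is no in-text proof to compare against line by line. What the paper does tell us about the companion's argument (the wavefunction extends to an entire function under the energy bound, and Theorem~3 of \cite{zeros} writes the wavefunction of a stellar-rank-$r$ state as a complex Gaussian times a degree-$r$ polynomial) indicates that your route --- holomorphic extension of order at most $2$, then Hadamard factorization of a zero-free entire function into $e^{P}$ with $\deg P\le 2$, then $L^2$-normalizability forcing $\operatorname{Re}$ of the leading coefficient to be negative --- is the same strategy, specialized to $r=0$. The easy direction and the reduction of arbitrary angles to $\theta=0$ via $e^{-i\theta\hat n}$ (which preserves $\langle s^{\hat n}\rangle$ and the Gaussian class) are both fine.

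There is, however, one concretely wrong step as written: the claimed Hermite estimate $|\phi_n(z)|\lesssim P(n)\,e^{b|z|^2}$ with $P$ polynomial in $n$. If that bound held, every $\ell^2$ coefficient sequence would already give an entire extension of order $2$, making the hypothesis $\langle s^{\hat n}\rangle_\psi<\infty$ superfluous --- which cannot be right, since generic finite-energy states do not have entire wavefunctions. The correct uniform bound off the real axis grows subexponentially but superpolynomially in $n$, roughly $|\phi_n(x+iy)|\lesssim e^{c\sqrt{n}\,|y|}\,e^{(y^2-x^2)/2}$, and it is precisely the competition between $e^{c\sqrt{n}|z|}$ and the decay $|\psi_n|\le K s^{-n/2}=K e^{-n\ln s/2}$ supplied by the energy bound that yields absolute convergence and, after optimizing over $n$, a bound of the form $Ce^{a|z|^2}$ with $a$ depending on $s$. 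You correctly identify this as the crux and defer it to \cite{zeros}, so the gap is in the stated form of the lemma rather than in the architecture of the proof; with the corrected estimate the rest of your argument goes through.
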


\noindent 
This implies that for most pure states the complex zeros of the wavefunction may be thought of as signatures of non-Gaussianity.  When these zeros are over the real line, they are also zeros of the corresponding quadrature distribution. This suggests a strategy for witnessing non-Gaussianity by measuring a quadrature and looking at the zeros of the outcome distribution. 


\paragraph{Witnessing non-Gaussianity.}
\label{sec:witness}
%
%
Consider the following problem: having only access to samples from a single quadrature distribution $p_{\hat\rho,\theta}(q)\coloneqq\bra q\hat\rho\ket q_{\hat q_\theta}$ of some state $\hat \rho$ for some fixed $\theta$, how can we detect that $\hat \rho$ is non-Gaussian?
First note that one cannot witness non-Gaussianity from a single quadrature distribution for arbitrary states: the state $\int p_{\hat\rho,\theta}(q) \ket{q}\!\bra{q}_{\hat q_\theta}\mathrm dq$ yields the same statistics under the quadrature measurement $\hat q_\theta$ as the state $\hat \rho$, but is (the limit of) a convex mixture of Gaussian states.  Such a mixture however is unphysical as it requires arbitrarily localized states with infinite energy.  This leads us to consider statistical mixtures of energy-constrained pure states:
\begin{equation}
    \mathcal S^E \coloneqq\mathrm{cch}(\{\psi\in\mathcal H\,|\,\langle\hat n\rangle\le E\}),
\end{equation}
where the closed convex hull $\mathrm{cch}(S)$ of a set $S$ is the closure of the set of convex combinations of elements in $S$. In other words, we consider mixed states for which there exists a spectral decomposition where none of the pure states involved has a mean energy exceeding $E$. We further consider subsets of $\mathcal S^E$ with finite stellar rank:
\begin{defi}[Set of energy bounded, finite stellar rank states]
    The set of states with stellar rank at most $r$ and energy bounded by $E$, in the sense that $\hat \rho \in  \mathcal S^E$, is denoted $\mathcal S^E_r$. 
\end{defi}

We may now introduce our key definition:

\begin{defi}[Witness and threshold value]\label{def:witness}
    Let $x\in\mathbb R$, $\theta\in[0,2\pi)$, $\eta >0$ and $E\ge0$. 
    We define a non-Gaussianity witness as the quadrature projector onto a window of size $\eta $ around $x$:
    \begin{equation}\label{eq:witness_def}
        \hat W_{\theta,x,\eta} \coloneqq \int_{x-\frac\eta 2}^{x+\frac\eta 2} \ket{q}\!\bra{q}_{\hat q_\theta} \mathrm d q.
    \end{equation}
    This witness operator satisfies $\mathbb 0\preceq\hat W_{\theta,x,\eta}\preceq \mathbb 1$, and its threshold value at energy $E$ is defined by 
    \begin{equation}
        \label{eq:threshold_value_def}
        w_{\theta,x,\eta}^E\coloneqq\inf_{\hat\sigma \in \mathcal S_{0}^E}  \Tr(\hat\sigma\hat W_{\theta,x,\eta}).
    \end{equation}
\end{defi}

By definition, measuring the witness operator $\hat W_{\theta,x,\eta}$ on copies of an unknown state $\hat\rho\in\mathcal S^E$ and recording an expectation value smaller than $w_{\theta,x,\eta}^E$ certifies that $\hat\rho\notin\mathcal S^E_0$, i.e.~$\hat\rho$ is non-Gaussian. This procedure amounts to estimating the magnitude of a quadrature probability distribution at angle $\theta$ around a specific point $x$. The intuition behind this definition is based on \cref{th:Hudson}: since zeros in the quadrature wavefunctions (and thus in quadrature distributions) are signatures of non-Gaussianity, only specific non-Gaussian states will be able to achieve a minimal value for the witness.

This witnessing strategy requires that the witness is sound, i.e.~its threshold value is non-zero.

\begin{theo}[Soundness of witness]
\label{th:Wsound}
Let $x\in\mathbb R$ and $E\ge0$. There exists $\delta>0$ such that for all $0< \eta \le\delta$,
    \begin{equation}\label{eq:Wsound}
        w_{\theta,x,\eta}^E>0.
    \end{equation}
\end{theo}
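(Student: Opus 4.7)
The plan is to combine the compactness of $\mathcal S^E_0$ in trace norm (treated separately as a technical result of this work) with the fact that the quadrature density of any stellar-rank-zero state is strictly positive pointwise. This turns the infimum over an a priori infinite-dimensional set into a minimum attained at an explicit state that can then be analyzed directly.

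First I would record that the map $\hat\sigma\mapsto \Tr(\hat\sigma \hat W_{\theta,x,\eta})$ is trace-norm continuous on $\mathcal S^E_0$: since $\mathbb 0\preceq \hat W_{\theta,x,\eta}\preceq \mathbb 1$, one has $|\Tr((\hat\sigma_1-\hat\sigma_2)\hat W_{\theta,x,\eta})|\le \|\hat\sigma_1-\hat\sigma_2\|_1$. A continuous real-valued function on a compact set attains its infimum, so the compactness of $\mathcal S^E_0$ yields $\hat\sigma^*\in\mathcal S^E_0$ with $w_{\theta,x,\eta}^E = \Tr(\hat\sigma^*\hat W_{\theta,x,\eta})$.

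Next I would show that this attained value is strictly positive. Because $r^*(\hat\sigma^*)=0$ and the stellar rank is integer-valued, the infimum in the definition of mixed stellar rank is attained, so there is a decomposition $\hat\sigma^*=\sum_i p_i\,\ket{\psi_i}\!\bra{\psi_i}$ with each $\ket{\psi_i}$ a pure Gaussian state. Each rotated quadrature wavefunction $\psi_{i,\theta}$ is then a Gaussian, so $|\psi_{i,\theta}(q)|^2$ is a Gaussian density that is strictly positive on all of $\mathbb R$. Hence the quadrature density $p_{\hat\sigma^*,\theta}(q)=\sum_i p_i |\psi_{i,\theta}(q)|^2$ is a strictly positive, continuous function of $q$, and its integral over the compact window $[x-\eta/2, x+\eta/2]$ is strictly positive for every $\eta>0$. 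This gives $w^E_{\theta,x,\eta}>0$ and in fact shows that the $\delta$ in the statement may be taken arbitrarily large.

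The main obstacle is the compactness of $\mathcal S^E_0$ itself: without it, one could imagine sequences of Gaussian states escaping to infinity in phase space or developing arbitrarily large quadrature variance, along which the window probability tends to $0$ even though it is strictly positive for each individual member of the sequence. The energy constraint $\langle\hat n\rangle\le E$ is precisely what tames both pathologies and makes compactness possible. A secondary care point is verifying that states in $\mathcal S^E_0$ reached as trace-norm limits of convex mixtures still admit a decomposition into pure Gaussian states, which follows from closedness of the stellar-rank-zero set under trace-norm limits, itself part of the compactness package.
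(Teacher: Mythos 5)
Your proposal is correct and follows essentially the same route as the paper: invoke the compactness of $\mathcal S_0^E$ (Lemma~\ref{th:compactness_lemma}) together with trace-norm continuity of $\hat\sigma\mapsto\Tr(\hat\sigma\hat W_{\theta,x,\eta})$ to turn the infimum into an attained minimum, then use the strict positivity of the quadrature density of any mixture of pure Gaussian states to conclude $w^E_{\theta,x,\eta}>0$ for every $\eta>0$. Your observation that $\delta$ may be taken arbitrarily large is consistent with the paper, whose proof likewise yields positivity for all finite $\eta$ (the restriction to small $\eta$ only matters for the correctness statement, Theorem~\ref{th:Wcorrect}).
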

Moreover, for a window size small enough, the witness is complete on the set of energy-bounded states whose quadrature distribution vanishes somewhere:
\begin{theo}[Completeness of witness]
\label{th:Wcorrect}
    Let $x\in\mathbb R$ and $E\ge0$, let $\theta\in[0,2\pi)$, and let $\hat\rho\in\mathcal S^E$ be a state with quadrature distribution $p_{\hat\rho,\theta}(q)=\bra q\hat\rho\ket q_{\hat q_\theta}$ such that $p_{\hat\rho,\theta}(x)=0$. Then, there exists $\delta>0$ such that for all $\eta \le\delta$,
    \begin{equation}\label{eq:Wcorrect}
        \Tr(\hat\rho\hat W_{\theta,x,\eta})<w_{\theta,x,\eta}^E.
    \end{equation}
\end{theo}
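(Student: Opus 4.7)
The idea is to compare the small-$\eta$ asymptotics of the two sides of~\eqref{eq:Wcorrect}. Rewriting both as integrals, I aim to show
\begin{equation*}
    \int_{x-\eta/2}^{x+\eta/2} p_{\hat\rho,\theta}(q)\,\mathrm dq \;<\; \inf_{\hat\sigma\in\mathcal S^E_0}\int_{x-\eta/2}^{x+\eta/2} p_{\hat\sigma,\theta}(q)\,\mathrm dq,
\end{equation*}
and the plan is to establish that the left-hand side is $o(\eta)$ while the right-hand side is bounded below by $c\,\eta$ for some $c>0$, so that the strict inequality holds for $\eta$ sufficiently small.

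For the left side, I would show that $p_{\hat\rho,\theta}$ is H\"older-$1/2$ continuous for any $\hat\rho\in\mathcal S^E$. Because $\langle\hat n\rangle$ is lower semicontinuous in trace norm, membership in $\mathcal S^E$ implies $\langle\hat n\rangle_{\hat\rho}\le E$. For a pure state $\ket\psi$ with $\langle\hat n\rangle_\psi\le E_\psi$, the quadrature wavefunction satisfies $\|\psi^\theta\|_{H^1(\mathbb R)}^2 = 1+\langle\hat p_\theta^2\rangle_\psi \le 2E_\psi+2$, so the one-dimensional Sobolev embedding $H^1\hookrightarrow C^{0,1/2}$ yields both a uniform $L^\infty$ bound and a H\"older modulus controlled by $\sqrt{E_\psi+1}$. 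Taking any convex decomposition $\hat\rho=\sum_j\lambda_j\ket{e_j}\!\bra{e_j}$ with $\sum_j\lambda_j\langle\hat n\rangle_{e_j}=\langle\hat n\rangle_{\hat\rho}\le E$ and summing the pointwise estimates for $|e_j^\theta|^2$ gives $|p_{\hat\rho,\theta}(q)-p_{\hat\rho,\theta}(q')|\le C(E+1)\sqrt{|q-q'|}$. Together with $p_{\hat\rho,\theta}(x)=0$, this forces $\int_{x-\eta/2}^{x+\eta/2} p_{\hat\rho,\theta}(q)\,\mathrm dq = O(\eta^{3/2})$.

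For the right side, I would parameterize any $\hat\sigma\in\mathcal S^E_0$ by the mean $\mu$ and variance $V$ of its Gaussian $\theta$-quadrature distribution. The Heisenberg uncertainty between $\hat q_\theta$ and $\hat q_{\theta+\pi/2}$ combined with $\langle\hat n\rangle_{\hat\sigma}\le E$ yields $\mu^2+V+1/(4V)\le 2E+1$, confining $(\mu,V)$ to a compact subset of $\mathbb R\times(0,\infty)$. On this compact set the continuous, strictly positive Gaussian density $(2\pi V)^{-1/2}\exp[-(x-\mu)^2/(2V)]$ attains a positive minimum $c_1>0$, and uniform continuity provides a window of radius $\delta_2$ around $x$ on which $p_{\hat\sigma,\theta}\ge c_1/2$ uniformly in $\hat\sigma$. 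Integrating gives $w_{\theta,x,\eta}^E\ge (c_1/2)\eta$ for $\eta\le\delta_2$, which dominates the $O(\eta^{3/2})$ bound on the left side once $\eta$ is small enough.

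The main obstacle is the regularity step: turning the energy bound on a generic state of the closed convex hull $\mathcal S^E$ into a uniform H\"older modulus for its quadrature distribution via the $H^1$ embedding, in a way that survives mixing and closure in trace norm. Once this is secured, the matching linear lower bound on Gaussian states is a straightforward compactness argument in the phase-space parameters $(\mu,V)$, and the theorem follows by balancing the two scales.
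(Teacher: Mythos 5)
Your overall strategy --- pitting an $o(\eta)$ upper bound on $\Tr(\hat\rho\hat W_{\theta,x,\eta})$ against a linear-in-$\eta$ lower bound on $w_{\theta,x,\eta}^E$ --- is exactly the paper's, but both halves are executed by genuinely different means. For the target state, the paper simply asserts that $p_{\hat\rho,\theta}$ is continuously differentiable and Taylor-expands around the zero to get an $O(\eta^2)$ bound; you instead derive H\"older-$1/2$ regularity from the energy bound via the Sobolev embedding $H^1(\mathbb R)\hookrightarrow C^{0,1/2}$, obtaining the weaker but still sufficient $O(\eta^{3/2})$. Your route is more self-contained: the linearity of both the $H^1$ bound and the H\"older constant in the energy is precisely what lets the estimate survive mixing and closure, a point the paper leaves implicit. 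For the threshold, the paper invokes Lemma~\ref{th:compactness_lemma} (compactness of $\mathcal S_0^E$ in trace norm) to turn the infimum into a minimum and then uses strict positivity of Gaussian quadrature densities; you replace this with an explicit compactness argument in the parameters $(\mu,V)$ extracted from the Heisenberg relation $\mu^2+V+1/(4V)\le 2E+1$, bypassing Lemma~\ref{th:compactness_lemma} entirely. That is an attractive simplification, and it also sidesteps the delicate question of whether $\hat\sigma\mapsto p_{\hat\sigma,\theta}(q)$ is trace-norm continuous, which the paper's $\min_\sigma\min_q p_\sigma(q)\ge c_2$ step quietly relies on.

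There is, however, one genuine gap on the Gaussian side: you parameterize ``any $\hat\sigma\in\mathcal S_0^E$'' by the mean and variance of \emph{its} Gaussian quadrature distribution, but $\mathcal S_0^E$ contains statistical mixtures of pure Gaussians, whose quadrature distributions are mixtures of Gaussians rather than a single Gaussian; worse, only the \emph{average} energy of such a mixture is bounded by $E$, so individual components may have arbitrarily large energy and thus arbitrarily small density at $x$, escaping your compact parameter set. The fix is short but must be stated: writing $\hat\sigma=\sum_j\lambda_j\ket{G_j}\!\bra{G_j}$ with $\sum_j\lambda_j\langle\hat n\rangle_{G_j}\le E$, Markov's inequality gives $\sum_{j:\langle\hat n\rangle_{G_j}\le 2E}\lambda_j\ge\tfrac12$, and applying your pointwise bound $c_1(2E)$ to those components yields $p_{\hat\sigma,\theta}\ge c_1(2E)/2$ on the window, hence $w_{\theta,x,\eta}^E\ge (c_1(2E)/4)\,\eta$; the bound then extends to the trace-norm closure because $\Tr(\cdot\,\hat W_{\theta,x,\eta})$ is a bounded functional. (A reduction to pure Gaussians by ``linearity of the witness'' alone does not suffice, since the extreme points of $\mathcal S_0^E=\mathcal S^E\cap\{r^\star\le 0\}$ are not obviously pure Gaussians of energy at most $E$.) With this amendment your argument is complete and, in my view, closes two small holes left open in the paper's own proof.
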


The proofs of \cref{th:Wsound} and \cref{th:Wcorrect} rely on the following technical result which may be of independent interest:
\begin{lemma}
    Let $r\in\mathbb N$ and $E\ge0$. The set $\mathcal S_r^E$ is compact in the trace norm. 
      \label{th:compactness_lemma}
\end{lemma}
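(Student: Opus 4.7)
The plan is to realize $\mathcal S_r^E$ as a closed subset of the already-compact set $\mathcal S^E$, which reduces the proof to two sub-claims: (i) trace-norm compactness of $\mathcal S^E$ itself, and (ii) closedness of the stellar-rank bound $\{r^\star\le r\}$ inside $\mathcal S^E$. Once both are in hand, the lemma follows since a closed subset of a compact set is compact.

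For (i), I would argue via uniform tightness. For any $\hat\rho\in\mathcal S^E$, Markov's inequality applied to the number operator gives the tail bound
\begin{equation}
    \Tr\bigl(\hat\rho(\mathbb 1 - P_{\le N})\bigr) \le \frac{E}{N},
\end{equation}
where $P_{\le N}$ is the projector onto $\mathrm{span}\{\ket 0,\ldots,\ket N\}$. This uniform tightness, together with the trace-norm bound $\|\hat\rho\|_1=1$, yields trace-norm relative compactness by the bosonic analog of Prokhorov's theorem: from any sequence $\hat\rho_n\in\mathcal S^E$, finite-dimensional compactness of the truncations $P_{\le N}\hat\rho_n P_{\le N}$ combined with a diagonal argument, with the uniform tail estimate controlling the truncation error, yields a trace-norm convergent subsequence. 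Since $\mathcal S^E$ is closed by construction (as a closed convex hull), it is compact.

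For (ii), I would establish lower semi-continuity of the stellar rank in trace norm on $\mathcal S^E$. On pure states this follows from Hurwitz's theorem: if $\ket{\psi_n}\to\ket\psi$ in $L^2$ with a uniform energy bound, the stellar functions $F^\star_{\ket{\psi_n}}$ are entire with controlled growth (using the energy-to-growth estimates of the companion paper \cite{zeros} and of \cite{chabaud2022holomorphic}) and form a normal family whose only possible limit is $F^\star_{\ket\psi}$; since each $F^\star_{\ket{\psi_n}}$ has at most $r$ zeros counted with multiplicity, so does $F^\star_{\ket\psi}$. Extending to mixed states, I would fix pure-state decompositions of the $\hat\rho_n$ into stellar-rank-$\le r$ components and extract a limit decomposition of $\hat\rho$ with the same bound, leveraging compactness of the pure-state set $\mathcal P_r^E=\{\ket\psi\bra\psi : r^\star(\ket\psi)\le r,\ \langle\hat n\rangle_\psi\le E\}$, which itself follows from the parametrization \eqref{eq:parame_rank_r_state} together with coercivity of $\langle\hat n\rangle$ in $(\alpha,\chi)$.

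The main obstacle is the mixed-state step in (ii): lifting lower semi-continuity from pure to mixed states is not automatic in infinite dimensions, because convex-roof extensions of lower semi-continuous functions need not inherit that property, so one must carefully track the energies of the pure components in the approximating decompositions to remain inside $\mathcal P_r^E$. A cleaner alternative that bypasses this subtlety is to prove the identification $\mathcal S_r^E = \mathrm{cch}(\mathcal P_r^E)$ directly and then invoke the Banach-space theorem that the closed convex hull of a compact set is compact; this would shift the difficulty onto justifying that identification starting from the definition of $\mathcal S^E$.
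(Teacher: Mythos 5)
Your top-level strategy coincides with the paper's: realize $\mathcal S_r^E$ as the intersection of the compact set $\mathcal S^E$ with the closed set of states of stellar rank at most $r$, so that everything reduces to your sub-claims (i) and (ii). The difference is that the paper discharges both sub-claims by citation, whereas you propose to reprove them. For (i), the paper invokes Holevo's result that sets of states with bounded average energy are trace-norm compact, intersects with the closed set of pure states to get compactness of the pure energy-bounded set, and then applies the Banach-space theorem that the closed convex hull of a compact set is compact; your tightness-plus-diagonal argument is a valid standalone proof of essentially the same fact (just check that every element of the closed convex hull still satisfies $\Tr\bigl(\hat\rho(\mathbb 1-P_{\le N})\bigr)\le E/N$, which holds by convexity and trace-norm continuity of this functional). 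For (ii), the paper simply cites \cite[Theorem 5]{chabaud_stellar_2020}, which asserts exactly that the set of (mixed) states with stellar rank at most $r$ is trace-norm closed. Your Hurwitz/normal-family argument is the standard proof of the pure-state case, and your concern about lifting lower semi-continuity to mixed states through the convex-roof definition is precisely the genuine difficulty --- but it is one that the cited theorem already resolves, so it need not be re-derived here. As written, your proposal leaves that mixed-state step as an acknowledged gap, and your fallback identification $\mathcal S_r^E=\mathrm{cch}(\mathcal P_r^E)$ is itself nontrivial given the infimum over decompositions in the definition of mixed-state stellar rank; replacing that step by the citation turns your argument into the paper's proof.
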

The following Corollary of \cref{th:Wcorrect} allows for an operational interpretation of the witness violation.

\begin{cor}[Operational witness]
 Let $x\in\mathbb R$, $\eta >0$ and $E\ge0$. For any state $\hat\rho$ the amount of violation for the witness $\hat W_{\theta,x,\eta}$ satisfies
    \begin{equation}\label{eq:Woperational}
        \langle \hat \Delta^E_{\theta,x,\eta } \rangle \equiv w_{\theta,x,\eta}^E-\Tr(\hat\rho\hat W_{\theta,x,\eta})\le \inf_{\hat\sigma\in\mathcal S_0^E} D(\hat\rho,\hat\sigma),
    \end{equation}
    where $D$ denotes the trace distance.
    \label{lem:operational_witness}
\end{cor}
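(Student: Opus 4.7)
The plan is to exploit the variational structure of both sides: the left-hand side is bounded from above by swapping $\hat\rho$ for any admissible Gaussian mixture $\hat\sigma\in\mathcal S_0^E$, while the right-hand side is the best such swap measured in trace distance. The key observation is that $\hat W_{\theta,x,\eta}$ is an effect operator, $\mathbb 0\preceq\hat W_{\theta,x,\eta}\preceq\mathbb 1$, so its expectation values on two states cannot differ by more than the operational distinguishability between them, which is exactly captured by the trace distance.

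First I would fix an arbitrary $\hat\sigma\in\mathcal S_0^E$. Because $w_{\theta,x,\eta}^E$ is defined as an infimum over this very set, $w_{\theta,x,\eta}^E\le\Tr(\hat\sigma\hat W_{\theta,x,\eta})$. Subtracting $\Tr(\hat\rho\hat W_{\theta,x,\eta})$ from both sides gives
\begin{equation*}
w_{\theta,x,\eta}^E-\Tr(\hat\rho\hat W_{\theta,x,\eta})\le \Tr\bigl((\hat\sigma-\hat\rho)\hat W_{\theta,x,\eta}\bigr).
\end{equation*}
Next I would invoke the standard variational characterization of the trace distance: for any Hermitian trace-class operator $A$ and any effect $M$ with $\mathbb 0\preceq M\preceq\mathbb 1$, one has $\Tr(A M)\le \tfrac{1}{2}\|A\|_1+\tfrac{1}{2}\Tr(A)$. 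Applied to $A=\hat\sigma-\hat\rho$, which is Hermitian and traceless, this yields $\Tr((\hat\sigma-\hat\rho)\hat W_{\theta,x,\eta})\le \tfrac{1}{2}\|\hat\sigma-\hat\rho\|_1=D(\hat\rho,\hat\sigma)$.

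Combining the two inequalities, $w_{\theta,x,\eta}^E-\Tr(\hat\rho\hat W_{\theta,x,\eta})\le D(\hat\rho,\hat\sigma)$ holds for every $\hat\sigma\in\mathcal S_0^E$; taking the infimum on the right-hand side gives the claim. There is no real obstacle here since the statement follows directly from combining the variational definition of $w_{\theta,x,\eta}^E$ with the dual characterization of the trace distance as the maximal bias over two-outcome measurements; one need not invoke the compactness result of \cref{th:compactness_lemma}, so in particular the bound holds whether or not the infimum in \eqref{eq:Woperational} is attained.
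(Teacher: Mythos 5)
Your proof is correct and follows essentially the same route as the paper's: both reduce the claim to the fact that an effect operator $\mathbb 0\preceq\hat W_{\theta,x,\eta}\preceq\mathbb 1$ cannot separate the expectation values of two states by more than their trace distance, and then take the infimum over $\hat\sigma\in\mathcal S_0^E$. The only cosmetic difference is that the paper phrases this step as the total variation distance of the induced binary measurement distributions being bounded by the trace distance, whereas you prove the same bound directly via the variational inequality $\Tr(AM)\le\tfrac{1}{2}\lVert A\rVert_1+\tfrac{1}{2}\Tr(A)$; your closing remark that compactness of $\mathcal S_0^E$ is not needed here is also consistent with the paper's proof.
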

The proofs of Lemma~\ref{th:compactness_lemma} (\cref{app:comp_lem_proof}) and \cref{th:Wsound}, \cref{th:Wcorrect}, and \cref{lem:operational_witness} (all \cref{app:valid_wit}) are deferred to the Appendix.

We end this section by mentioning two notable generalizations of our witnesses, detailed in the Appendix. First, we generalize the witness definition to include Gaussian-blurred homodyne measurements (\cref{sec:Gaussian_windows}), leading to convenient expressions for threshold values. Second, we provide a straightforward adaptation of the above results for witnessing a specific stellar rank $k$ by optimizing over $\mathcal S_{k-1}^E$ (\cref{sec:witness_stellar}).




\paragraph{Experimental protocol.}


As previously stated, to witness non-Gaussianity from a single quadrature dataset there must be an energy bound on the target state, i.e.\ that $\hat\rho \in \mathcal S^E$.  It should be possible to estimate $E$ from the experimental conditions (e.g.\ the preparation procedure, independent energy measurements, etc.).  It follows that the typical conclusion resulting from a successful application of our method will be ``either the state is non-Gaussian, or it has energy higher than $E$''.  The operational witnessing procedure is summarized in Table~\ref{tab:pseudo_code_witn}.
\begin{table*}[t]
\centering
\caption{\textbf{Witnessing non-Gaussianity from a single quadrature}}
\label{tab:pseudo_code_witn}
\begin{tabular}{ll}
\hline
\textbf{Given:} & Copies of $\hat\rho$ with the promise that $\hat\rho \in \mathcal S^E$ \\ \hline
\textbf{Step 1:} & Identify a quadrature $\hat q_\theta$ and a point $x$ such that the probability distribution of $\hat\rho$ \\
&is close to zero at $x$. \\
\textbf{Step 2:} & Fix an $\eta > 0$ and define witness $\hat W_{\theta,x,\eta}$~\eqref{eq:witness_def}. \\
\textbf{Step 3:} & Determine threshold value $w^E_{\theta,x,\eta}$ via numerical optimization over $\mathcal S^E_0$. \\
\textbf{Step 4:} & Determine expectation value $\smash{\mathrm{Tr}(\hat\rho \hat W_{\theta,x,\eta})}$ experimentally by measuring $\hat q_\theta$. \\
\textbf{Step 5:} & Compare $\smash{\mathrm{Tr}(\hat\rho \hat W_{\theta,x,\eta})}$ with $w^E_{\theta,x,\eta}$: \\
& 
$\mathrm{Tr}(\hat\rho \hat W_{\theta,x,\eta}) < w^E_{\theta,x,\eta}$ $\Rightarrow\ \hat\rho$ \text{non-Gaussian (or } $\hat\rho \notin \mathcal S^E)$. \\
& 
$\mathrm{Tr}(\hat\rho \hat W_{\theta,x,\eta}) \geq w^E_{\theta,x,\eta}$  $\Rightarrow$ \text{no conclusion possible.}
\\ \hline
\end{tabular}
\end{table*}

First, we have to identify a quadrature angle $\theta$ such that the quadrature distribution $\bra q\hat\rho\ket q_{\hat q_\theta}$ admits a zero at $q_\theta = x$. This can be done by exploiting theoretical knowledge about the prepared state $\hat\rho$ or by looking for a suitable quadrature experimentally. 
Strictly speaking, in the second approach we leave the realm of only measuring a single quadrature.  
But this might be anyway necessary to increase the robustness of the protocol as we discuss shortly. 
While in general our witnesses require knowledge of the right quadratures to certify a given state, for states diagonal in the Fock basis, phase-randomized homodyne detection yields the phase-averaged quadrature distribution. In particular, for an ideal Fock state this distribution is identical to that of any fixed quadrature, so our methods can be applied to certify its stellar rank also using a freely drifting (non-phase-locked) local oscillator. Furthermore, the interval containing the zero of the observed state can be inferred from measured data. Specifically, in an experimental search for a pair $(\theta,x)$, the acquired quadrature samples for each angle can be divided into two batches. The first batch is used to reconstruct the quadrature probability density. Based on this reconstruction, one selects the center $x$ of a window enclosing a dip in the distribution. The second batch is then used in the subsequent witnessing protocol. An example illustrating this is presented in \cref{sec:heuristic_witness_construction}. 

Having identified a pair $(\theta, x)$, the witness observable $\hat W_{\theta,x,\eta}$ is defined according to Eq.~\eqref{eq:witness_def} by fixing a window width $\eta > 0$. This can be found heuristically but an appropriate width to witness a violation in the presence of a zero is guaranteed to exist (see \cref{sec:theo_guar_bin_size}).

We next determine the threshold $w^E_{\theta ,x,\eta }$ \eqref{eq:threshold_value_def}.  The minimization over energy-bounded Gaussians is performed via numerical optimization, and the linearity of the witness implies the minimum is attained for a pure state. We provide parametrized expressions for the energy and quadrature distributions of any pure state with finite stellar rank in \cref{sec:explicit_expressions_threshold}. The optimization is numerically straightforward but finding thresholds to certify higher stellar ranks becomes difficult due to the increasing number of parameters and the nonlinearity of the objective function in these parameters.  

Finally, the expectation $\smash{\text{Tr} (\hat\rho \hat W_{\theta,x,\eta})}$ is experimentally estimated and compared with the threshold. 
If $\smash{\text{Tr} (\hat\rho \hat W_{\theta,x,\eta})} < w^E_{\theta ,x,\eta }$ then, under the energy promise, the prepared state is non-Gaussian.  Moreover, its trace distance to $\mathcal S_0^E$ is lower-bounded by $w^E_{\theta ,x,\eta }-\smash{\text{Tr} (\hat\rho \hat W_{\theta,x,\eta})}$ via Lemma~\ref{lem:operational_witness}.  If there is no violation then one can try repeating the procedure with a smaller window size $\eta$. 

The question of which target states are most compatible with this procedure amounts to determining which states contain at least one quadrature distribution zero; see \cite{zeros}. 
We point out that the Fock state $\ket{n}$ performs particularly well since it has $n$ distinct zeros in every quadrature distribution.
In summary, our single-quadrature witnessing protocol assumes a known energy constraint and some prior knowledge of the prepared state, as is typical in quantum state-preparation experiments. Under these conditions, non-Gaussianity can be witnessed from measurements of a single quadrature.

\paragraph{Sample complexity.} 
\label{sec:sample_complexity}

,
In practice, any experiment 
 
will of course rely on a finite number of measurement samples. 

Another problem is a trivial ``cheating'' strategy: center the witness outside the range of observed outcomes. Both issues are addressed by the following result:
\begin{theo}[Sample complexity for witnessing non-Gaussianity]\label{th:sampleComplexity1}
    Consider a witness $\hat{W}_{\theta,x,\eta}$ with threshold $w_{\theta,x,\eta}^E$ and suppose the state being measured is in $\mathcal{S}_0^E$. The probability that the 
    witness estimate
    derived from $M$ measurement outcomes, $p_\mathrm{fail}$, will be smaller than a fixed $w_{\theta,x,\eta}^E-\epsilon$, with $\epsilon>0$, is bounded by \begin{equation}
        p_\mathrm{fail}\leq \exp\left( -2M\epsilon ^2\right).
    \end{equation}
\end{theo}

\begin{proof}
    Consider a witness $\hat{W}_{\theta,x,\eta }$ \eqref{eq:witness_def} with threshold value $w_{\theta,x,\eta}^E$ \eqref{eq:threshold_value_def}.  For the prepared state $\hat\sigma$ define the 
    estimate
    $\bar{w}_{\hat\sigma} = \frac{1}{  M}\sum_{j=1}^{M}\chi_{\left[x-\frac{\eta }{2},x+\frac{\eta }{2}\right]}\left(q_j\right)$, where $\chi_{\left[a,b\right]}$ is an indicator function and $\{q_1,\ldots,q_M\}$ are the $M$ measurement samples of $\hat{q}_\theta$. By definition, $\bar{w}_{\hat\sigma}$ is a sum of i.i.d.\ random variables with range $\left[0,1\right]$.
    For any $\epsilon >0$, we have 
    \begin{equation}
        \begin{aligned}
            p_\mathrm{fail}&=\mathrm{Pr}\left(\bar{w}_{\hat\sigma}<w_{\theta,x,\eta}^E-\epsilon \right) \\& \leq \mathrm{Pr}\left(\bar{w}_{\hat\sigma}<\mathbb{E}\left[\bar{w}_{\hat\sigma}\right]-\epsilon \right)\\
            &\leq\exp\left(-2M\epsilon ^2 \right)
        \end{aligned}
    \end{equation}
    where the first inequality follows from the definition of $\hat\sigma$ and $w_{\theta,x,\eta}^E$ and the second from Hoeffding's inequality~\cite{hoeffding1963probability}.
\end{proof} 

In other words, witnessing a value $\bar{w}_{\hat\sigma}<w_{\theta,x,\eta}^E-\epsilon $ certifies that $\hat\sigma$ is non-Gaussian (or has energy larger than $E$) with confidence $1-\delta$, provided $\bar{w}_{\hat\sigma}$ is estimated using at least $\log(1/\delta)/(2\epsilon ^2)$ measurement samples. This also rules out the aforementioned cheating strategy because obtaining outcomes at a distant $x$ is extremely unlikely for energy-bounded Gaussians, thereby requiring a prohibitively high sample complexity to obtain any reasonable degree of confidence.


\paragraph{Robustness of the protocol.}
\label{sec:robustness}
 
In this section we explore how the choice of window size $\eta$ combined with experimental imperfections might impact the ability to witness non-Gaussianity.  As a prototypical example, consider as a target the single-photon state $\ket{1}$ with energy $E=1$. This state is rotation-symmetric, implying that all quadrature distributions $|\langle q_\theta | 1 \rangle|^2 = \frac{2}{\sqrt{\pi }} q_\theta^2 e^{-q_\theta^2}$ have the same zero-set: $q_\theta=0$.  Fixing an angle $\theta$, Theorem \ref{th:Wcorrect} implies a set of windows around the origin within which no Gaussian in $\mathcal S_0^1$ can minimize its probability profile better than $\ket{1}$. The optimal Gaussian performance for a general window size $\eta$ (i.e.\ the threshold $w^1_{\theta, 0, \eta}$) is found numerically using analytic expressions presented in \cref{sec:explicit_expressions_threshold}. 

We then take the difference between the target's witness expectation and the optimal Gaussian expectation to obtain the maximal possible violation $\langle \hat \Delta^E_{\theta,x,\eta } \rangle$ \eqref{eq:Woperational}.  The solid blue curve in Figs.\ \ref{fig:stellar-rank-1-width-and-num-quads} and \ref{fig:minimal_info_mins} show this violation as a function of $\eta$. A value above zero thus indicates the certifiability of the energy-bounded prepared state being non-Gaussian.

\begin{figure}[h!]
    \centering
    \includegraphics[width=\linewidth]{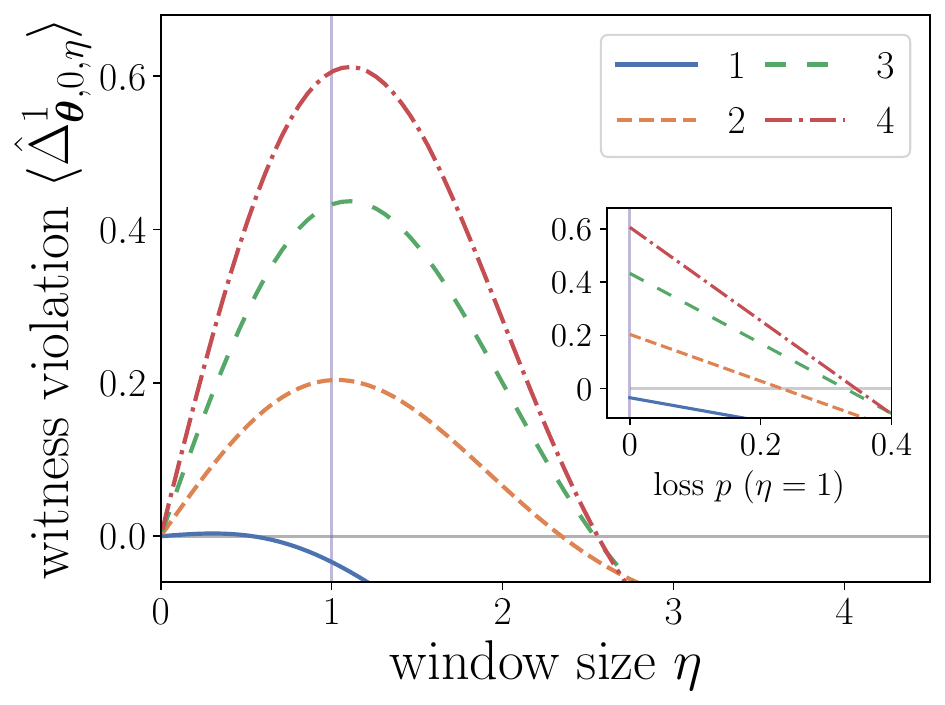}
    \caption{Outer figure: Optimized violation over $\mathcal S_0^1$ of the non-Gaussianity witness, $\langle \hat \Delta^1_{\bm\theta, 0,\eta }\rangle$, as a function of window size and number of equiangularly distributed quadratures used.  The single-photon state $\ket{1}$ is the target.  Inset: The same optimized violation with the horizontal axis denoting the loss $p$ (at a fixed window $\eta = 1$) of a decohered target state.}
    \label{fig:stellar-rank-1-width-and-num-quads}
\end{figure} 

The additional curves in Fig.\ \ref{fig:stellar-rank-1-width-and-num-quads} exploit the rotational symmetry of the target state and are found by minimizing $\sum_{k=0}^{n-1} \hat W_{k \frac{\pi}{n},0,\eta}$ over $S_0^1$ and subtracting $n$ copies of the target expectation.  See \cref{sec:sampleComplexityManyQuads} for a discussion of the sample complexity in this modified measurement set-up.  The inset shows how the optimal Gaussian performance linearly increases relative to a lossy single-photon target state $\hat \rho(p) = p \ketbra{0}{0} + (1-p) \ketbra{1}{1}$.  These two results show that while using a single quadrature for witnessing non-Gaussianity is only achievable for near-perfect target states, the inclusion of even one additional quadrature yields roughly a hundredfold stronger witness violation and therefore a much higher robustness to loss. Since loss may in practice make it necessary to consider witnesses based on more than a single quadrature measurement, we explore in \cref{sec:no_energy_promise} an extended witness defined as a convex combination of the number operator and the original witness operator. The advantage of this approach is that it removes the need for an energy promise on the target state. In the original approach, this energy promise can be verified from measurements of three linearly independent quadratures.

\begin{figure}
    \centering
\includegraphics[width=\linewidth]{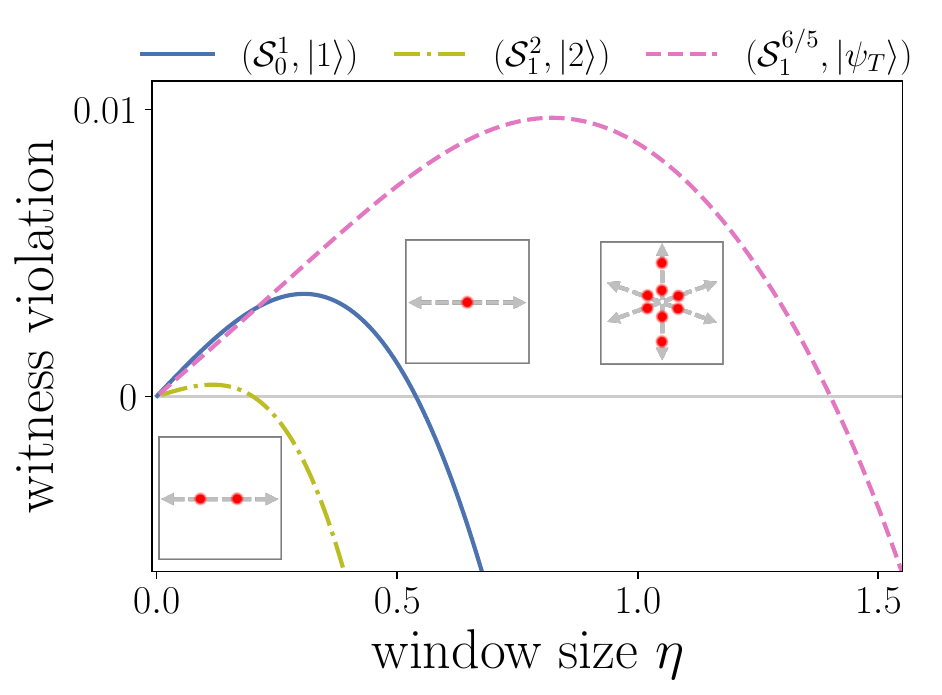}
    \caption{Optimized witness violation for three set-ups that do \textit{not} exploit any symmetry of the target state: the solid blue curve (same as Fig.\ \ref{fig:stellar-rank-1-width-and-num-quads}) denotes the violation over $\mathcal S_0^1$ using one quadrature, the dot-dashed olive curve denotes the violation over $\mathcal S_1^2$ using one quadrature, and the dashed pink curve denotes the violation over $\mathcal S_1^{6/5}$ using six quadratures.  The insets schematically display the zeros and their quadratures. }
    \label{fig:minimal_info_mins}
\end{figure}

Generally, using more quadratures also leads to improvements if we wish to witness specific stellar ranks, though with some additional nuance.  If the target is now the two-photon state $\ket{2}$, which has zeros at $\pm 1/\sqrt{2}$ in all quadratures, the optimal witness violation over $\mathcal S_1^2$ using a single quadrature is again non-zero but highly sensitive to noise --- see the bottom olive curve in Fig.\ \ref{fig:minimal_info_mins}.  And if more quadratures are utilized like the previous example, the performance again dramatically increases (not shown).  However, consider the target state 
$\ket{\psi_T} = \sqrt{1/10}\ket{0} + i\sqrt{3/5}\ket{1}+\sqrt{3/10}\ket{2}$.  It has energy $\frac65$, stellar rank $2$, and is less symmetric than the Fock state $\ket{2}$, with only 8 zeros distributed over 6 quadrature angles $\theta_i\in[0,2\pi)$.  Optimizing over $\mathcal S_1^{6/5}$ along any subset of these six quadratures again produces a non-trivial violation landscape that can witness stellar rank 2 --- see the pink dashed curve in Fig.\ \ref{fig:minimal_info_mins} for the case of using all six angles.  There the maximal window size $\eta \approx 1.4$ and the maximal witness violation is high relative to any single-quadrature analysis.  Unlike Fock states however, we no longer have the option to add additional angles to combat the noise sensitivity.  

That being said, if one instead optimizes over $S_0^{6/5}$ along the six quadratures then we recover strong robustness (i.e.\ $\hat \Delta \approx 0.53$ for $\eta \approx 0.83$, not shown), albeit now with respect to witnessing general non-Gaussianity rather than specifically stellar rank 2.  This result is unfortunately not universal: we have found other stellar-rank-2 target states where the energy-bounded Gaussians do not yield a high violation despite summing over all available quadrature zeros arising from the target state.  All together these results suggest a subtle relationship between energy, robustness to noise, and the placement of quadrature zeros.
In particular it appears that the most robust target states tend to have high stellar rank, low energy, and many zero-containing quadratures with zeros located near the origin.  One aspect of this relationship that we may always conclude is that the robustness is highly sensitive to the energy bound since the threshold value $w_{\theta,x,\eta}^E$ decreases exponentially with $E$.




 


\paragraph{Discussion.}
\label{sec:discussion}

Here we designed a simple and operational protocol to certify the non-Gaussianity of a state based on the zeros of quadrature probability distributions. 
We constructed a family of witness operators, the expectation value of which is lower-bounded by a computable threshold if the state is Gaussian and energy-bounded.  
We then demonstrated how to engineer the witness operator so that a threshold violation must occur if the tested state exhibits a zero on one of its quadratures.  
In a companion paper \cite{zeros} we establish the general prevalence of such zeros and conjecture they exist for all energy-bounded non-Gaussian pure states.
For some specific states, we explained how to adapt our protocol to certify arbitrarily high non-Gaussianity as measured by the stellar rank, using a single quadrature, and bounded the sample complexity of such certification.
While theoretically sound, these witnesses can only tolerate a small amount of noise in the sense that a mixture of Gaussian distributions might be able to match the witness value of a lossy non-Gaussian prepared state.  To compensate, we provided a numerical analysis that reveals that measuring even a single additional quadrature angle can drastically improve the noise robustness of the witnesses.

Our work opens several research directions.  A natural next step is to better characterize the class of target states most compatible with our witnessing protocol, both for general non-Gaussianity and for specific stellar ranks.  Moreover, it would be of theoretical and practical interest to understand the structure of the target states for which the certification is most robust.  This will require a deeper understanding of how the complex zeros of a wavefunction evolve when changing the quadrature angle.  We also note that ensuring the witness operator is centered on an exact zero is not strictly necessary: it suffices that the distribution dips low enough so that no energy-bounded Gaussian state can achieve a lower witness value. This may offer greater flexibility in optimizing the efficiency of the protocol.

Another line of research is to adapt our framework to a heterodyne measurement set-up \cite{hughes2014quantum}.  The witnesses would be two-dimensional windows of coarse-grained coherent state POVM elements, centred at Husimi function zeros.  This could have a more direct link to stellar rank, and it would be worthwhile to compare the noise robustness between approaches.


\paragraph*{Acknowledgements.}

U.C.\ thanks O.~Hahn, G.~Ferrini, A.~Ferraro, V.~Upreti and Z.~Van Herstraeten for interesting discussions. U.C.\ and J.D.\ acknowledge funding from the European Union’s Horizon Europe Framework Programme (EIC Pathfinder Challenge project Veriqub) under Grant Agreement No.~101114899.

\paragraph*{Author contribution statement.}
C.W. and J.D. led the theoretical calculations and numerical simulations, and contributed equally to this work. All authors contributed to discussing the results and to the writing of the manuscript. The generative AI tool "Claude" was used to assist with code writing for some of the numerical simulations. The authors take full responsibility for all content, including any code produced with such assistance.

\paragraph*{Data availability.}

The data that support the findings of
this article are openly available \cite{data_plotting}.


\bibliographystyle{quantum}
\bibliography{ref}   

---------

\onecolumngrid
\newpage
\appendix

\section{The sets of states with bounded energy and finite stellar rank are compact}
\label{app:comp_lem_proof}
In this section, we give a proof of Lemma 6 in the main text.

We consider here the metric space $\mathcal T(\mathcal H)$ of all trace class operators on the Hilbert space $\mathcal H = L^2(\mathbb R)$ with the trace norm. The set of quantum states is denoted as $\mathcal D(\cH) \subset \mathcal T(\mathcal H) $, i.e.~the set of linear, positive operators $\hat \rho$ on $\cH$ with a finite trace and $\|\hat \rho \|_1 = 1$.  The space $(\mathcal T(\mathcal H),\|\cdot\|_{1})$ is a Banach space and the following basics facts and known results, valid in this space, will be used in the proof.
\begin{enumerate}[label=(\roman*)]
    \item The intersection of a closed set and a compact set is compact. 
    \item The set of states with bounded average energy is compact \cite{holevoEntanglementassistedCapacityConstrained2003}.
    \item The set of pure states is closed (it is the preimage of the set $\{1\}$ under the continuous $2$-norm function).
    \item The set of states with stellar rank bounded by $r$ is closed \cite[Theorem 5]{chabaud_stellar_2020}.
    \item The closed convex hull of a compact set is compact \cite[Theorem 5.35]{aliprantisInfiniteDimensionalAnalysis2006}.
\end{enumerate}
For convenience we repeat the relevant definition of the set of statistical mixtures, or the closure of the convex hull, of energy-constrained pure states
\begin{equation}
    \mathcal S^E\coloneqq \mathrm{cch}(\{\psi\in\mathcal H\,|\,\langle\hat n\rangle_{\psi}\le E\}),
\end{equation}
and its subset of states with stellar rank bounded by $r$ which we denote by $\mathcal S_r^E$. We remark that the set $\mathcal S^E$ contains all states which can be written as a convex combination of pure states whose mean energy is individually not exceeding $E$. We want to prove that $\mathcal S_r^E$ is compact by repeatedly making use of ($\operatorname{i}$). Obviously, $\mathcal S_r^E$ is the intersection of $\mathcal S^E$ with the set of states with stellar rank bounded by $r$. Since the latter is closed ($\operatorname{iv}$) we are left with proving that the set $\mathcal S^E$ is compact. The set \begin{equation}
   \mathcal S^E_\mathrm{pure}\coloneqq  \{\psi\in\mathcal D(\mathcal H)|\|\psi^2\|_1 = 1,\,\langle\hat n\rangle_{\psi}\le E\}
\end{equation}
of pure, energy-bounded states is the intersection of a closed ($\operatorname{iii}$) and compact ($\operatorname{ii}$) set and is hence itself compact by ($\operatorname{i}$). Since by definition $\mathcal S^E$ is the closed convex hull of the compact set $\mathcal S^E_\mathrm{pure}$, $\mathcal S^E$ is by (v) compact.   
  Recalling that $\mathcal S^E_\mathrm{pure}$ is compact we use ($\operatorname{v}$) to deduce that $\mathcal S^E$ is compact. The Lemma follows. 

\section{Optimal Gaussian states are pure and have maximal energy}

In this section, we show that the optimal Gaussian states minimizing the witness expectation value for a given energy parameter $E$ are pure and have maximal energy $E$.

\begin{theo}
Let us consider the witness operator $W$ defined as:
\begin{equation}
    \hat W_{\theta,x,\eta} = \int_{x - \eta/2}^{x + \eta/2} \dyad{q}{q}_{\hat q_\theta} dq,
\end{equation}
where $\ket{q}_{\hat q_\theta}$ is the eigenstate of the quadrature operator $\hat q_\theta$ associated to the eigenvalue $q$. Then: 
\begin{equation}
    \min_{\rho \in S_0^E} \Tr(\hat W_{\theta,x,\eta}\rho) = \min_{\substack{\ket\psi\!\bra\psi\in S_0^E\\ \bra\psi \hat n\ket\psi = E}} \bra\psi \hat W_{\theta,x,\eta}\ket\psi.
\end{equation}
\end{theo}

\begin{proof}
First, notice that
\begin{equation}
    \min_{\rho \in S_0^E} \Tr(\hat W_{\theta,x,\eta}\rho) = \min_{\ket\psi\!\bra\psi\in S_0^E} \bra\psi \hat W_{\theta,x,\eta}\ket\psi,
\end{equation}
because the objective function is linear and the set $\mathcal S_0^E$ is the convex set of mixtures of pure Gaussian states with energy less than or equal to $E$.

Let $\rho=\ket\psi\!\bra\psi \in S^E_0$ be a pure Gaussian state, and $\hat D(t) = e^{-it \hat p_\theta}$ be the displacement operator along the quadrature at angle $\theta$ by a quantity $t$. 

Suppose $\Tr(\hat n \rho) < E$. Then, there exists a neighborhood $V\subset \mathbb R$ of $0$ such that for all $t \in V$, $\Tr(\hat n \hat D(t) \rho \hat D(t)^\dagger) < E$. We show here that $t = 0$ is not a local minimum of $t \mapsto \Tr( \hat n D(t)\rho D(t)^\dagger)$, which implies that $\rho$ is not a global minimum of $\Tr(\hat W_{\theta,x,\eta}\rho)$.
To find the extremum of our witness operator with respect to this displacement, we evaluate the derivative of its expectation value:
\begin{equation}
    \frac{d}{dt} \Tr(\hat W_{\theta,x,\eta} \rho(t)) \bigg|_{t=0} = \Tr \left( \hat W_{\theta,x,\eta}\frac{d}{dt} \left(e^{-it \hat p_\theta} \rho e^{it\hat p_\theta}\right) \right) \bigg|_{t=0} = -i \Tr(\hat W_{\theta,x,\eta} [\hat p_\theta, \rho]).
\end{equation}
Using the cyclic property of the trace, we rewrite this in terms of the commutator of $\hat W_{\theta,x,\eta}$ and $\hat p_\theta$:
\begin{equation}
    \frac{d}{dt} \Tr(\hat W_{\theta,x,\eta} \rho(t)) \bigg|_{t=0} = i \Tr([\hat W_{\theta,x,\eta}, \hat p_\theta] \rho).
\end{equation}

Next, we compute the commutator $[\hat W_{\theta,x,\eta}, \hat p_\theta]$. Since $\hat p_\theta$ is the generator of translations for the quadrature $\hat q_\theta$, its action on the position projectors is $i[\dyad{q}{q}_{\hat q_\theta}, \hat p_\theta] = -\frac{d}{dq} \dyad{q}{q}_{\hat q_\theta}$. Therefore:
\begin{equation}
    i[\hat W_{\theta,x,\eta}, \hat p_\theta] = - \int_{x - \eta/2}^{x + \eta/2} \frac{d}{dq} \dyad{q}{q}_{\hat q_\theta} dq.
\end{equation}
Evaluating this integral yields a difference of the projectors evaluated at the boundaries of the integration interval:
\begin{equation}
    i[\hat W_{\theta,x,\eta}, \hat p_\theta] = \dyad{(x-\eta/2)}{(x-\eta/2)}_{\hat q_\theta}  - \dyad{(x+\eta/2)}{(x+\eta/2)}_{\hat q_\theta} 
\end{equation}

Taking the trace with the density matrix $\rho$, the derivative of the expectation value becomes the difference of the probability densities at the boundaries:
\begin{equation}
    \frac{d}{dt} \Tr(\hat W_{\theta,x,\eta} \rho(t)) \bigg|_{t=0} = \bra{x-\eta/2}_{\hat q_\theta} \rho \ket{x-\eta/2}_{\hat q_\theta} - \bra{x+\eta/2}_{\hat q_\theta} \rho \ket{x+\eta/2}_{\hat q_\theta}.
\end{equation}

For this derivative to vanish—which is a necessary condition for reaching an extremum—these boundary values must be exactly equal:
\begin{equation}
    \bra{x-\eta/2}_{\hat q_\theta} \rho \ket{x-\eta/2}_{\hat q_\theta} = \bra{x+\eta/2}_{\hat q_\theta} \rho \ket{x+\eta/2}_{\hat q_\theta}.
\end{equation}

Since $\rho$ is a pure Gaussian state, its probability distribution in the $\hat q_\theta$ basis is a Gaussian function. A Gaussian function takes equal values at the boundaries of an interval if and only if the interval is exactly centered on its mean. Because a Gaussian profile strictly decreases away from its peak, centering the Gaussian exactly on the interval $[x-\eta/2, x+\eta/2]$ ensures we are maximizing the overlap with $\hat W_{\theta,x,\eta}$, rather than minimizing it. Thus, if $\rho$ is a local optimum with energy $\Tr(\hat n \rho)<E$, it is a local maximum, and hence not a global minimum.
\end{proof}

\section{Validity of the witnesses}
\label{app:valid_wit}

In this section, we show the validity of the witnesses defined in the main text. We also include at the end of this section the proof of Corollary 7 in the main text about the operational interpretation of the witness violation.  

Firstly, we show in \cref{sec:valid} that, for a given state $\rho$ of energy $E$ with a quadrature zero at $q_\theta=x$, there always exists a window size $\eta(\rho,E,x)$ for which the witness $\hat W_{\theta,x,\eta}$ successfully distinguishes the state $\rho$ from any Gaussian state of energy at most $E$.

Secondly, we strengthen this result in \cref{sec:theo_guar_bin_size} by showing that, for a given state $\rho$ satisfying the energy constraint $\langle s^{\hat n}\rangle_\rho\le S$ for some $s>1$ with a quadrature zero at $q_\theta=x$, there always exists a window size $\eta(x_0,s,S,E)$ for which the witness $\hat W_{\theta,x_0,\eta}$ successfully distinguishes the state $\rho$ from any Gaussian state of energy at most $E$. Importantly, in that case, the bound on the window size does not depend on the specific state at hand and only depends on the $x_0$ and the energy parameters.

\subsection{Completeness and soundness}
\label{sec:valid}

 For convenience, we recall the relevant definitions: Let $x\in\mathbb R$, $\theta\in[0,2\pi)$, $\eta >0$ and $E\ge0$. 
    We define a non-Gaussianity witness as the quadrature projector onto a window of size $\eta $ around $x$:
    \begin{equation}\label{eq:witness_def_app}
        \hat W_{\theta,x,\eta} \coloneqq \int_{x-\frac\eta 2}^{x+\frac\eta 2} \ket{q}\!\bra{q}_{\hat q_\theta} \mathrm d q.
    \end{equation}
    This witness operator satisfies $\mathbb 0\preceq\hat W_{\theta,x,\eta}\preceq \mathbb 1$, and its threshold value at energy $E$ is defined by 
    \begin{equation}
    \label{eq:threshold_value_def_app}
        w_{\theta,x,\eta}^E\coloneqq\inf_{\hat\sigma \in \mathcal S_{0}^E}  \Tr(\hat\sigma\hat W_{\theta,x,\eta}).
    \end{equation}
We now consider the two functions $f: \eta  \mapsto \Tr\bigl(\hat\rho \hat W_{\theta,x,\eta}\bigr)$ and $w_{\theta,x}^E: \eta  \mapsto w_{\theta,x,\eta}^E$ in the limit $\eta  \to 0$. To prove the soundness (Theorem 4) and completeness (Theorem 5) of our witness in Definition~\ref{def:witness}, we will show that $f$ and $w_{\theta,x}^E$ have a different scaling in this limit. We start with the former. Let $p_{\hat\rho}(q)$ be the quadrature distribution of the target state $\hat\rho$ which by assumption has a zero at $x$. Here and in the following we omit the explicit reference to the corresponding quadrature angle $\theta$. The quadrature distribution  $p_{\hat\rho}(q)$ of any quantum state is a continuous, (at least) 1-differentiable function in $L^1(\mathbb R)$. Taylor's theorem tells us that 
\begin{equation}
    p_{\hat\rho}(q) =  p_{\hat\rho}'(x)(q-x) + R(q)(q-x),
\end{equation}
with $\lim_{q \to x}R(q) = 0$. 
With this we can write 
\begin{equation}
\begin{aligned}
     f(\eta ) &= \int_{x -\frac\eta 2}^{x +\frac\eta 2} p_{\hat\rho}(q) \mathrm dq \\
     &= \left|\int_{x -\frac\eta 2}^{x +\frac\eta 2} p_{\hat\rho}(q) \mathrm dq \right| \\
     &\leq  |p'_{\hat\rho}(x)|  \int_{x -\frac\eta 2}^{x +\frac\eta 2} |(q-x)|\mathrm dq  + \left(\max_{q \in [x -\frac\eta 2,x +\frac\eta 2]}|R(q)| \right) \int_{x -\frac\eta 2}^{x +\frac\eta 2}  |(q-x)|  \mathrm dq  \\ 
     &= \left(|p'_{\hat\rho}(x)|+\max_{q\in [x -\frac\eta 2,x +\frac\eta 2]}|R(q)|  \right) \int_{x -\frac\eta 2}^{x +\frac\eta 2}  |(q-x)|  \mathrm dq \\
     &= \left(|p'_{\hat\rho}(x)|+\max_{q\in [x -\frac\eta 2,x +\frac\eta 2]}|R(q)|  \right) \frac{\eta ^2}{4}.
     \end{aligned}
\end{equation}
The remainder term $R(q)$ from Taylor's theorem satisfies
\begin{align}
   \lim_{q \to x} R(q) = 0 \Leftrightarrow \quad   &  \lim_{\eta \to 0} R(x+\frac{\eta}{2}) = \lim_{\eta \to 0} R(x-\frac{\eta}{2}) = 0 \\
     \Leftrightarrow \quad  &  (\forall \epsilon >0) (\exists \eta_1)( \forall 0<\eta \leq \eta_1): |R(x\pm \frac{\eta}{2} )|\leq \epsilon.  
\end{align}
Consequently, there exists a $\eta_1 >0$ and some constant $c_1 > 0$ such that for all $0< \eta \leq \eta_1$ it holds that $f(\eta) \leq c_1 \eta^2$. Now we turn to the threshold value function $w_{\theta,x}^E$. It is defined by  
\begin{equation}
       w_{\theta,x}^E(\eta ) \coloneqq \inf_{\sigma \in \mathcal S_{0}^E}  \Tr(\sigma \hat W_{\theta,x,\eta}).
\end{equation}
 By our compactness result (Lemma~6) the infimum is contained in the set $\mathcal S_0^E$ and hence
\begin{equation}
\begin{aligned}
    \label{ineq:lowerboundgauss}
     w_{\theta,x}^E(\eta ) &= \min_{\sigma \in \mathcal S_{0}^E}  \Tr(\sigma \hat W_{\theta,x,\eta}) \\
    &= \min_{\sigma \in \mathcal S_{0}^E} \int_{x -\frac\eta 2}^{x +\frac\eta 2} p_\sigma(q)\mathrm dq \\
    & \geq \eta \min_{\sigma \in \mathcal S_{0}^E} \min_{q \in [x -\frac\eta 2,x +\frac\eta 2]} p_\sigma(q) .
\end{aligned}
\end{equation}
  The set $ \mathcal S_0^E$ corresponds to the set of statistical mixtures of energy-bounded, pure Gaussian states. Their quadrature distribution is strictly positive. Therefore for any finite value of $\eta$ we have
  \begin{equation}
\begin{aligned}
      w_{\theta,x}^E(\eta )  > 0. 
\end{aligned}
\end{equation}
The positivity of $p_\sigma(q)$ also implies that there exists a constant $c_2 > 0$ such that 
\begin{equation}
  \min_{\sigma \in \mathcal S_{0}^E} \min_{q \in [x -\frac\eta 2,x +\frac\eta 2]}  p_\sigma(q) \geq c_2,
\end{equation} for all $ \eta > 0$. 
Consequently it holds that $ w_{\theta,x}^E(\eta ) \geq c_2 \eta$.
Finally we notice that for any two constants $c,c' > 0$ there exists a $\eta_0 >0$ such that for all  $0 <\eta \leq \eta_0 $ it holds that $c' \eta > c\eta^2$. Putting everything together this means that we can find a $\delta =\min(\eta_1, \eta_0) > 0$ such that for all $0< \eta  \leq \delta$ it holds that
\begin{equation}
    0 < \Tr \bigl(\hat\rho \hat W_{\theta,x,\eta} \bigr) < w_{\theta,x,\eta}^E. 
\end{equation}

\subsection{State-independent guarantees}\label{sec:theo_guar_bin_size} 

The bounds obtained in the previous section show that for any state with a quadrature zero, there exists a window size such that the witness defined by the corresponding window around that zero successfully distinguishes the state from Gaussian states of the same energy. However, the exact value of the window size may depend on the specific unknown, tested state.

In this section, we prove stronger bounds, obtaining a state-independent value $\eta$ for the window size which only depends on the zero and the energy parameters. These bounds require an exponential energy bound on the tested state, which is a mild assumption satisfied by all states of finite stellar rank \cite{zeros}.

This results in a procedure that, given a quadrature angle, with high probability, detects a zero on a given interval of the quadrature values if it exists, or rejects if there is none. This is achieved by coarse-graining the interval of quadrature values into bins of size $\eta$, and testing all corresponding witnesses. In \cref{sec:heuristic_witness_construction}, we illustrate this procedure with a numerical study, further detailing how the confidence for the witness threshold violation can be optimized over the choice of bins.

We now turn to the results of this section: given a state $\psi$ with a quadrature zero at $q_\theta=x_0$ and satisfying an exponential energy bound $\bra{\psi}s^{\hat n}\ket{\psi}=S<+\infty$ for some $s>1$, we first derive an upper bound on the probability to obtain a quadrature outcome $x$ near its zero $x_0$, which only depends on $x_0$ and the energy parameters $s,S$ and not on the specific state $\psi$.

\begin{theo}\label{th:state_indpt_guar}
Let $\ket{\psi}$ be a quantum state with finite exponential energy $S = \bra{\psi}s^{\hat n}\ket{\psi}$ for some $s > 1$. Let $x_0\in\mathbb R$ and let $0<\eta\le1$ be a small interval width. Assuming that $\psi_\theta(x_0)=0$, the probability of finding the state within the interval $[x_0 - \eta/2, x_0 + \eta/2]$ satisfies the non-asymptotic bound:
\begin{equation}
    \Pr_{\psi}\left(x \in \left[x_0 - \frac{\eta}{2}, x_0 + \frac{\eta}{2}\right]\right) \le \eta^2 f(x_0, s, S),
\end{equation}
where $f(x_0, s, S)$ depends only on $x_0$ and the energy parameters $s,S$. 

\end{theo}

\begin{proof}
     
Suppose, up to applying a phase shift---which does not affect the energy bound---that $\theta = 0$. 

Let $x_0$ be such that $\psi(x_0) = 0$, where for a state 
$\ket{\chi}$, $\chi(x)$ denotes the position wavefunction evaluated at $x$. 
Then, if one denotes $\ket{\phi} =\hat D(-x_0)\ket{\psi}$, one has $\phi(0) = 0$, and for any $1 < t < s$, following \cite{zeros}:

\begin{equation} 
    \label{eq:displacedexpenergybound} 
    \bra{\phi}t^{\hat n}\ket{\phi} = \bra{\psi}D(x_0)t^{\hat n} D(-x_0)\ket{\psi} \le \frac{s}{t} \exp\left( \frac{(t-1)(s-1)}{s-t} \frac{x_0^2}{2} \right) S,
\end{equation} 
where $S = \bra{\psi}s^{\hat n}\ket{\psi}$. 

Then, by \cite[Lemma 9]{zeros}, one has
\begin{equation}\label{eq:expholobound}
    \vert \phi(x) \vert^2 \le K e^{L\vert x \vert ^2}, 
\end{equation}
with 
\begin{equation}\label{eq:KandL}
    K = \frac{C(t)^2 \sqrt{t}}{(\sqrt{t}-1)\sqrt{\pi}} \bra{\phi}t^{\hat n}\ket{\phi} \quad \text{and} \quad L = 1 + \frac{2}{e} + \frac{8}{t^\alpha - 1},
\end{equation}
for an arbitrary choice of $\alpha \in (0, 1/2)$, where $C(t) = \sup_{p \in \mathbb{N}} \sqrt{2p+1}\, t^{-(\frac{1}{2}-\alpha)p}$. 
By \cite[Theorem 1]{zeros}, the position 
wavefunction $\phi$ extends to a holomorphic function over the complex plane. Thus, 
it satisfies Cauchy's estimate, i.e.\ for any $r > 0$, for all $z \in \mathbb{C}$:

\begin{equation} 
    \vert \phi'(z)\vert \le \max_{w \in B(z, r)} \vert \phi(w)\vert \frac{1}{r}.
\end{equation} 

Let us denote $p_{\phi}(x) = \vert \phi(x)\vert^2$ the position probability distribution associated to $\phi$. 
Then, one has: 
\begin{align} 
    \vert p_\phi'(x) \vert &= \vert 2\Re(\phi(x)\overline{\phi'(x)})\vert \\ 
    &\le 2 \vert \phi(x) \vert \vert\phi'(x)\vert \\ 
    &\le 2 \vert \phi(x) \vert \max_{z \in B(x, r)}\vert \phi(z)\vert \frac{1}{r}\\ 
    &\le \frac{2K^2}{r} e^{L(\vert x\vert^2 + \vert x + r|^2)},
\end{align} 
where we used \cref{eq:expholobound} in the last line.
Since $p_{\phi}(0) = 0$, we deduce that for all $x \in [-\eta/2, \eta/2]$, $p_{\phi}(x) \le \frac{K^2\eta}{r}e^{L(\eta^2/4 + (\eta/2 + r)^2)}$.
In particular, integrating over $[-\eta/2, \eta/2]$ for some window size $\eta > 0$, one gets: 

\begin{align} 
    \Pr_{\psi}(x \in [x_0 - \eta/2, x_0 + \eta/2]) &= \Pr_{\phi}(x \in [-\eta/2, \eta/2]) \\ 
    &= \int_{-\eta/2}^{\eta/2} p_{\phi}(x)dx\\ 
    &\le \frac{K^2\eta^2}{r }e^{L(\eta^2/4 + (\eta/2 + r)^2)}\\
    &\le\eta^2(2K^2e^{2L}),
\end{align} 
where we have set $r=\frac12$ and used $\eta\le1$ in the last line.
To conclude, we set $f(x_0,s,S):=2K^2e^{2L}$ which only depends on $x_0$, $s$ and $S$ by combining \cref{eq:displacedexpenergybound} and \cref{eq:KandL}.

\end{proof}

\noindent We note that the factor $f(x_0,s,S)$  can be optimized by choosing the best values for the free parameters $r,t,\alpha$ appearing in the proof above and expressions of $K$ and $L$.

\smallskip

We now strengthen the lower bound (\ref{ineq:lowerboundgauss}) for energy-constrained Gaussian states.

\begin{lemma}\label{lem:lowerboundGauss}
Let $E\ge0$, let $x_0\in\mathbb R$ and let $0<\eta\le1$ be a small interval width. If $w_{x_0, \eta}^E$ denotes the minimum probability for a state in $\mathcal{S}_0^E$ to be measured at position $x \in [x_0 - \eta / 2, x_0 + \eta/2]$, then: 
    \begin{equation}
        w_{x_0, \eta}^E \ge\eta\,g(x_0,E),
    \end{equation}
where $g(x_0,E)$ only depends on $x_0$ and the energy $E$.         
\end{lemma}

\begin{proof}
By linearity of the probability density with $\rho \in \mathcal{S}_0^E$, the minimum must be obtained on a pure Gaussian state. Denote $\mu = \langle \hat q \rangle_\rho, \mu_p = \langle \hat p \rangle_\rho$ and $ \sigma^2 = \mathrm{Var}_\rho(\hat q^2) = \langle{\hat q^2}\rangle_\rho - \mu^2, \sigma_p^2 = \mathrm{Var}_\rho(\hat p) = \langle \hat p^2\rangle_\rho - \mu_p^2$, so that $\langle \hat n\rangle_\rho = \frac{1}{2}\left(\langle \hat q^2\rangle_\rho + \langle \hat p^2 \rangle_\rho \right) = \frac{1}{2} \left(\sigma^2 + \mu^2 + \sigma_p^2 + \mu^2\right)$. We also denote $\sigma_{qp} = \frac{1}{2}\langle \hat q \hat p + \hat p \hat q\rangle_\rho$. The energy constraint is then
\begin{equation}
    \sigma^2 + \mu^2 + \sigma_p^2 + \mu^2 \le 2E.
\end{equation}

Since $\rho$ is a pure Gaussian state, its position probability distribution at $q$ is
\begin{equation}
    p(q) = p_q(\mu, \sigma) = \frac{1}{\sqrt{2\pi}\sigma} \exp\left( -\frac{(q-\mu)^2}{2\sigma^2} \right).
\end{equation}
The derivative with respect to a position displacement is
\begin{equation}
    \frac{\partial p_q}{\partial \mu} = \frac{(q-\mu)}{\sigma^2} p_q(\mu, \sigma), 
\end{equation}
which vanishes only at the unique global maximum $\mu = q$. Thus, under any constraint of the form $\mu^2 + \sigma^2\leq C$, it is necessary to saturate the inequality to minimize $p$.

To minimize $p_q$ with fixed $\sigma$, we thus maximize $|\mu|$ within the energy constraint. From the constraint $\mu^2 = 2E - \sigma^2 - (\mu_p^2 + \sigma_p^2)$, this requires minimizing the momentum contributions. We can set $\mu_p = 0$. By the uncertainty principle $\sigma^2 \sigma_p^2 - \sigma_{qp}^2 \geq 1/4$, the minimum of $\sigma_p^2$ is $1/(4\sigma^2)$, reached when $\sigma_{qp} = 0$. Substituting these values, the optimal position displacement saturating the energy constraint is
\begin{equation}
    \mu = - \mathrm{sgn}(q) \sqrt{2E - \sigma^2 - \frac{1}{4\sigma^2}}.
\end{equation}
We now maximize the exponent and obtain a uniform bound: $(q - \mu)^2 \leq 2(q^2 + 2E) = C_{q,E}$. This reduces the problem to studying the objective function $G(\sigma) = \frac{1}{\sqrt{2\pi}\sigma} \exp( -\frac{C_{q,E}}{2\sigma^2})$.
Letting $x = \sigma^2$, the function $G(x) \propto x^{-1/2} e^{-C_{q,E}/2x}$ strictly increases until its unique maximum and then strictly decreases. Hence, the minimum of $G(\sigma)$ over the interval $[\sigma_{min}, \sigma_{max}]$ is attained at one of the endpoints. 
We conclude that the minimum probability density is reached at either the maximal compression $\sigma_{min}$ or the maximal expansion $\sigma_{max}$ allowed by the energy $E$, where $\sigma_{\pm}^2 = E \pm \sqrt{E^2 - 1/4}$.
A blunt approximation consists in taking $\sigma_+ \le \sqrt{2E}$ in the normalisation, and $\sigma_-^2 \ge E(1-\sqrt{1/4E}) \ge 1/8)$ in the exponent, giving
$p_q(\mu, \sigma) \ge \frac{1}{\sqrt{4\pi E}} e^{-8(q^2 + E)}$.

We finally obtain
\begin{align}
        w_{x_0, \eta}^E &\geq \eta \min_{\rho \in \mathcal S_{0}^E} \min_{q \in [x_0 -\frac\eta 2,x_0 +\frac\eta 2]} p_\rho(q) \\
        &\ge \frac{\eta}{\sqrt{4\pi E}} e^{-8((|x_0| + \frac{\eta}{2})^2 + E)}\\
        &\ge\eta\left(\frac{1}{\sqrt{4\pi E}} e^{-8((|x_0| + \frac12)^2 + E)}\right).
\end{align}
To conclude, we set $g(x_0,E):=\frac{1}{\sqrt{4\pi E}} e^{-8((|x_0| + \frac12)^2 + E)}$, which only depends on $x_0$ and $E$.
\end{proof}

Combining the two bounds from \cref{th:state_indpt_guar} and \cref{lem:lowerboundGauss}, we obtain that under an exponential energy condition $\langle s^{\hat n}\rangle\le S$, the witnesses $\hat W_{x_i,\eta,\theta}$ (where $x_i$ are centers of equally spaced bins of size $\eta$ in an interval $[-X,X]$) detect quantum non-Gaussianity against Gaussian states of energy $E$ if there is a quadrature zero in the interval $[-X,X]$ as long as
\begin{equation}
    \eta < g(X, E)/ f(X, s, S),
\end{equation}
independently of the tested state.

\subsection{Operational interpretation of witness violation}

Here we quickly prove Corollary 7 from the main text. For convenience we  reproduced it here:
\begin{cor}[Operational witness]
 Let $x\in\mathbb R$, $\eta >0$ and $E\ge0$. For any state $\hat\rho$ the amount of violation for the witness $\hat W_{\theta,x,\eta}$ satisfies
    \begin{equation}
        \langle \hat \Delta^E_{\theta,x,\eta } \rangle \equiv w_{\theta,x,\eta}^E-\Tr(\hat\rho\hat W_{\theta,x,\eta})\le \inf_{\hat\sigma\in\mathcal S_0^E} D(\hat\rho,\hat\sigma),
    \end{equation}
    where $D$ denotes the trace distance.
\end{cor}

\begin{proof}
The proof follows closely that of \cite[Lemma 1]{chabaud2021witnessing}. 
Consider the binary measurement $\{\hat W_{\theta,x,\eta},\mathbb 1- \hat W_{\theta,x,\eta}\}$ and the associated probability distribution $P^{\hat\sigma}_W(0)=1-P^{\hat\sigma}_W(1)= \mathrm{Tr}(\hat W_{\theta,x,\eta}\hat\sigma)$ for a state $\hat\sigma$. 
Let $\hat \sigma\in\mathcal S_0^E$ so that $\mathrm{Tr}(\hat W_{\theta,x,\eta}\hat\sigma)\ge w_{\theta,x,\eta}^E$, by definition of the threshold value. We have:
\begin{equation}
    \begin{aligned}
       w_{\theta,x,\eta}^E -\Tr(\hat W_{\theta,x,\eta}\hat\rho)&\le\left|\Tr(\hat W_{\theta,x,\eta}\hat\rho)-\Tr(\hat W_{\theta,x,\eta}\hat\sigma)\right|\\
        &= \|P^{\rho}_W-P^\sigma_W\|_\mathrm{tvd}\\
        &\le D(\hat \rho,\hat \sigma),
    \end{aligned}
\end{equation}
where $\|\cdot\|_\mathrm{tvd}$ denotes the total variation distance and the last line follows from the operational property of the trace distance. Taking the infimum over $\hat \sigma\in\mathcal S_0^E$ concludes the proof.
\end{proof}

\section{Explicit expressions for computing witness threshold values}
\label{sec:explicit_expressions_threshold}

To perform the optimization in the definition of the threshold value (see Eq.~\eqref{eq:threshold_value_def}) in practice we provide the expressions for the quadrature distribution and average energy for any pure state $\ket{\psi_r}$ of finite stellar rank $r$ in terms of the parametrization \cite{chabaud_stellar_2020}
\begin{equation}
    \ket{\psi_r} = \hat D(\alpha) \hat S(\chi)\sum_{n =0}^r c_n \ket{n}.
\end{equation}
The quadrature distribution of $\ket{\psi_r}$ along $\hat q_\theta = e^{-i \theta \hat n }\hat x e^{i \theta \hat n } $ is
\begin{equation}
    |\langle q_\theta | \psi_r\rangle |^2 = \left| \sum_{n=0}^r c_n \frac{e^{in\theta}}{\sqrt{2^n n!}} H_n\left( \frac{q_\theta - \mu_\theta(\alpha)}{\sqrt{2}\sigma_\theta(\chi)} \right) \left[ \frac{\cosh{|\chi|} - e^{-i(\phi + 2\theta)}\sinh{|\chi|}}{\sqrt{2}\sigma_\theta(\chi)} \right]^n \right|^2 \, |\langle q_\theta | \alpha, \chi \rangle|^2. \label{eq:quad_distr_*rank_r_state}
\end{equation}
Here $H_n(x)$ is the $n^{\text{th}}$ Hermite polynomial, $\chi = |\chi|e^{i\phi}$, and $|\langle q_\theta | \alpha, \chi \rangle|^2$ is the quadrature distribution of the Gaussian state $\ket{\alpha, \chi}$,
\begin{equation}\label{DSVac_quad}
    |\langle q_\theta | \alpha,\chi\rangle|^2 = \frac{1}{\sigma_\theta(\chi)\sqrt{2\pi}} e^{-\frac{1}{2}\left( \frac{q_\theta - \mu_\theta(\alpha)}{\sigma_\theta(\chi)}\right)^2},
\end{equation} 
with mean and standard deviation
\begin{equation}
\begin{aligned}
    \mu_\theta(\alpha) &= \sqrt{2}\text{Re}(\alpha e^{-i\theta}) \\
    \sigma_\theta(\chi) &= \sqrt{\frac12( \cosh(2|\chi|) - \cos(\phi + 2\theta)\sinh(2|\chi|) )}.
\end{aligned}
\end{equation}
For the mean energy of the state we get
  \begin{multline}\label{eq:energy_SRr}
         \bra{\psi_r}\hat n \ket{\psi_r} = |\alpha|^2 + \sinh^2(|\chi|) + \cosh(2|\chi|)\sum_{n=1}^r n |c_n|^2 - 2  \sinh(|\chi|)\cosh(|\chi|)\sum_{n=0}^{r-2}\sqrt{n+1}\sqrt{n+2} \text{Re}\bigl(e^{-i\phi}c_n^*c_{n+2}\bigr) \\
      +\cosh(|\chi|)\sum_{n=0}^{r-1}\sqrt{n+1} \text{Re}\bigl(\alpha^*c_n^*c_{n+1}\bigr)-\sinh(|\chi|)\sum_{n=0}^{r-1}\sqrt{n+1} \text{Re}\bigl(\alpha e^{-i \phi }c_n^*c_{n+1}\bigr),
    \end{multline}
where as usual a summation with a smaller upper limit then the initial value is defined to be zero.  
All above expressions can be obtained by direct manipulation \cite{Moller_DSN_states_position_1996}. Eq.~\eqref{eq:quad_distr_*rank_r_state} in fact represents the explicit form of the result from Theorem~3 in our companion paper \cite{zeros} where we abstractly prove that the wave function of a state with stellar rank $r$ is the product of a complex Gaussian with a complex polynomial of degree $r$.

We now give more detail on the low-dimensional examples discussed in the main text.  If the target state is the Fock state $\ket{1}$, it is rotation-symmetric and so all quadrature distributions $|\langle q_\theta | 1 \rangle|^2 = \frac{2}{\sqrt{\pi }} q_\theta^2 e^{-q_\theta^2}$ have the same zero-set, namely $q_\theta=0$.
The set of pure states with stellar rank 0
is exactly the set of pure single-mode Gaussian states $\ket{\alpha, \chi}$ with $\chi = |\chi|e^{i\phi} \in \mathbb C$ and $\alpha \in \mathbb C$.

Without loss of generality in this rotation-symmetric example, we restrict to the position quadrature $\hat q_{\theta=0}= \hat x$.  The target state expectation is
\begin{equation}\label{Fock1witnessValue}
\begin{aligned}
    \langle 1 | \hat W_{0,0,\eta } |1 \rangle \! &= \!\int_{-\frac\eta 2}^{\frac\eta 2} |\langle x | 1\rangle|^2 \mathrm d x = - \frac{\eta}{\sqrt{\pi}} e^{- (\frac{\eta }{2})^2} + \erf\left( \frac{\eta }{2} \right),
\end{aligned}
\end{equation}
while the unbounded squeezed coherent states yield
\begin{equation}\label{pureGaussian_on_Fock1Witness}
\langle \alpha,\chi| \hat W_{0,0,\eta } | \alpha,\chi\rangle = \int_{-\frac\eta 2}^{\frac\eta 2} |\langle x | \alpha,\chi\rangle|^2 \mathrm d x = \frac{1}{2}\left[ \erf\left( \frac{\frac{\eta }{2}-\mu_0(\alpha)}{\sqrt{2}\sigma_0(\chi)} \right) - \erf\left( \frac{-\frac{\eta }{2}-\mu_0(\alpha)}{\sqrt{2}\sigma_0(\chi)} \right)  \right] ,
\end{equation}
where $\erf(z)$ is the error function.  See Fig.\ 1 in the main text (blue solid line) for a plot of the optimized witness violation value $\langle \hat \Delta^E_{\theta,x,\eta } \rangle$ from Corollary 7 as a function of $\eta$.  In the case of the lossy target state $\hat \rho(p) = p\ketbra{0}{0}+(1-p)\ketbra{1}{1}$, the target state expectation changes linearly (for a fixed $\eta$) as a function of the loss $p$:
\begin{equation}
    \tr[\hat \rho(p) \hat W_{0,0,\eta}] = p\int_{-\frac\eta 2}^{\frac\eta 2} |\langle x | 0\rangle|^2 \mathrm d x + (1-p)\int_{-\frac\eta 2}^{\frac\eta 2} |\langle x | 1\rangle|^2 \mathrm d x = \langle 1| \hat W_{0,0,\eta } | 1\rangle + p\frac{\eta}{\sqrt{\pi}} e^{- (\frac{\eta }{2})^2},
\end{equation}
See the inset of Fig.\ 1 for this dependence, both in the single-quadrature case and the many-quadrature case.

Now consider pure states of stellar rank 1, $\ket{\psi_1} = \hat D(\alpha) \hat S(\chi) (\cos\frac{s}{2}\ket{0} + e^{it}\sin\frac{s}{2}\ket{1})$. Specifying \cref{eq:quad_distr_*rank_r_state}, its quadrature distribution is 
\begin{align}
    |\langle q_\theta |\psi_1\rangle|^2 &= \Bigg[ \cos^2\frac{s}{2} + \frac{q-\mu_\theta(\alpha)}{\sqrt{2}\sigma_\theta(\chi)^2} \sin s \Big(\cosh r \cos (\theta +t)-\sinh r \cos (t - \theta - \phi )\Big) + \left(\frac{q-\mu_\theta(\alpha)}{\sigma_\theta(\chi)} \right)^2 \sin ^2 \frac{s}{2} \Bigg] |f_0(q_\theta)|^2,
\end{align}
and its energy (\cref{eq:energy_SRr}) reduces to
\begin{align}
    \langle \psi_1 | \hat a^\dagger \hat a | \psi_1 \rangle &= \sinh^2 (|\chi|) +|\alpha|^2 + \sin^2(\frac{s}{2}) \cosh(2|\chi|)  + |\alpha| \sin(s) \Big( \cos(t - \theta_\alpha)\cosh(|\chi|) - \cos(t - \phi + \theta_\alpha)\sinh(|\chi|) \Big).
\end{align}
Using the notion $z_- = \frac{x - \frac{\eta }{2}-\mu_\theta(\alpha)}{\sigma_\theta(\chi)}$ and $z_+ = \frac{x + \frac{\eta }{2}-\mu_\theta(\alpha)}{\sigma_\theta(\chi)}$, the unbounded witness expectation is computed as
\begin{equation}
\begin{aligned}
    \int_{x-\frac{\eta}{2}}^{x+\frac{\eta}{2}} |\langle q_\theta | \psi_1 \rangle |^2 dq &= \frac{1}{2} \left[ \erf\left( \frac{z_+}{\sqrt{2}} \right) - \erf\left( \frac{z_-}{\sqrt{2}} \right)  \right] + \sin^2 \frac{s}{2} \frac{1}{\sqrt{2 \pi }}\left(z_- e^{-\frac{1}{2}z_-^2} - z_+ e^{-\frac{1}{2}z_+^2}\right) \\
    &\quad + \sin s \Big(\cosh r \cos (\theta +t)-\sinh r \cos (t - \theta - \phi )\Big) \frac{1}{\sigma_\theta(\chi)2\sqrt{\pi}} \left[  e^{-\frac12 z_-^2} - e^{-\frac12 z_+^2}\right].
\end{aligned}
\end{equation}
So if the target state is the second Fock state, $\ket{2}$, this wavefunction has two zeroes at $\pm \frac{1}{\sqrt{2}}$ along all quadratures.  Hence, if one only uses a single quadrature, then the minimization program (as a function of $\eta$) for the threshold is
\begin{align}
    &\min_{\alpha,\chi,s,t} \left[ \int_{-\frac{1}{\sqrt{2}}-\frac{\eta}{2}}^{-\frac{1}{\sqrt{2}}+\frac{\eta}{2}} |\langle q_\theta | \psi_1 \rangle |^2 dq + \int_{\frac{1}{\sqrt{2}}-\frac{\eta}{2}}^{\frac{1}{\sqrt{2}}+\frac{\eta}{2}} |\langle q_\theta | \psi_1 \rangle |^2 dq \right], \quad \text{such that } \langle \psi_1 | \hat a^\dagger \hat a | \psi_1 \rangle \leq 2.
\end{align}
In this rather special case of a single quadrature on the Fock-2 target state, we may without loss of generality set $\theta=0$.  For a more general target state, we would need to find the privileged angles and focus only on them as described in the main text. 

\section{Generalised witnesses}

Here we discuss two alternative versions of our primary witness \eqref{eq:witness_def} and explore their consequences on the witness-ability of non-Gaussianity.
 
\subsection{Gaussian windows}
\label{sec:Gaussian_windows}
Here we investigate a modification of the witness operator \eqref{eq:witness_def} based on a Gaussian blurring of the quadrature POVM.  Let $g_{x,\gamma}(\cdot)$ be the normalized Gaussian probability distribution with mean $x$ and standard deviation $\gamma$.  The Gaussian-blurred quadrature operators are
\begin{equation}
    \hat V_{\theta,x, \gamma} := \int g_{0,\gamma}(q - x) \ket{q}\!\bra{q}_{\hat q_\theta} \, dq = \int_\mathbb{R} g_{x,\gamma}(q) \, \ket{q}\!\bra{q}_{\hat q_\theta} \, dq.
\end{equation}
This convolves each quadrature POVM element $\ket{q}\!\bra{q}_{\hat q_\theta}$ with a Gaussian of width $\gamma$, which plays the role of an effective ``homodyne resolution''.  We then coarse-grain over a bin of size $\eta$ centered on $x$,
\begin{equation}
    \hat{W}_{\theta,x,\gamma,\eta}^\mathrm{soft} = \int_{x-\eta/2}^{x+\eta/2}\hat{V}_{\theta,y,\gamma}dy,
\end{equation}
corresponding to a soft homodyne binned measurement.  For a given quadrature distribution $|\psi(q_\theta)|^2$ and $\eta>0$, this new witness first convolves $|\psi(q_\theta)|^2$ by a Gaussian with size $\gamma$ (i.e.\ a form of classical added white noise) then integrates over the window. Note that, similar to the original witness \eqref{eq:witness_def}, this is a POVM element for $\eta >0$.

The Gaussian structure of the convolution allows for obtaining convenient expressions for the resulting distributions. Following the main example from the text, here we explore Fock state $\ket1$ as a target state. The soft analogue of Eq.~\eqref{Fock1witnessValue} can be found to be
\begin{equation}
\begin{aligned}
    \langle 1 | \hat W^{\text{soft}}_{0,0,\gamma,\eta } |1 \rangle \! &= -\frac{\eta}{\sqrt{\pi } \left(2 \gamma ^2+1\right)^{\frac{3}{2}}} e^{-\frac{\eta ^2}{4(2\gamma ^2 + 1)}} + \text{erf}\left(\frac{\eta }{2 \sqrt{2 \gamma ^2+1}}\right),
\end{aligned}
\end{equation}
while the soft analogue of the pure Gaussian value, Eq.~\eqref{pureGaussian_on_Fock1Witness}, generalizes to 
\begin{equation}
\langle \alpha,\chi| \hat W^{\text{soft}}_{0,0,\gamma,\eta } | \alpha,\chi\rangle = \frac{1}{2}\left[ \erf\left( \frac{\frac{\eta }{2}-\mu_0(\alpha)}{\sqrt{2(\sigma^2_0(\chi) + \gamma^2)}} \right) - \erf\left( \frac{-\frac{\eta }{2}-\mu_0(\alpha)}{\sqrt{2(\sigma^2_0(\chi) + \gamma^2)}} \right)  \right].
\end{equation}
In both cases $\lim_{\gamma \rightarrow 0}\langle \hat W^{\text{soft}}_{0,0,\gamma,\eta } \rangle = \langle \hat W_{0,0,\eta } \rangle$, as expected.  With these we then numerically compute the optimal energy-bounded Gaussian expectation value and determine any threshold violation as a function of $\gamma$ (see \cref{fig:3quadsoft}).

\begin{figure}[h!]
    \centering
    \includegraphics[width=0.5\linewidth]{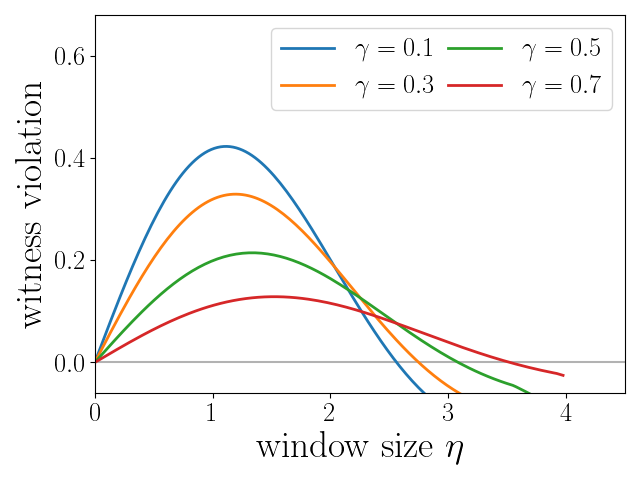}
    \caption{Non-Gaussian witness violation with a Fock $\ket{1}$ target using three quadratures at angles along the three roots of unity, and the soft homodyne binning as a function of the standard deviation $\gamma$.  Smaller values of $\gamma$ increase the maximal witness violation as expected.}
    \label{fig:3quadsoft}
\end{figure}

Similar to the original witness \eqref{eq:witness_def}, we expect that soundness and completeness theorems can be established for the witness $\hat{W}_{\theta,x,\gamma,\eta}^\mathrm{soft}$, for small enough values of $\gamma>0$. 

\subsection{Extended witnesses for certification without energy promise}
\label{sec:no_energy_promise}

To certify non-Gaussianity from measurements of a single quadrature, an energy promise on the state to certify is required, as discussed in the main text.  In the following, we consider an extension of the witness operator $\hat W_{\theta,x,\eta}$ from Eq.~(\ref{eq:witness_def}) which allows to drop the required energy promise on the target state  at the cost of leaving the realm of a single quadrature measurement. Let $\hat W$ be the projector from Definition~\ref{def:witness} and $p \in [0,1]$. 
We define \begin{equation}\label{eq:gen_witness_def}
    \hat V_{\bm{\theta},\bm{x},\eta}^{(p)} \coloneqq p \, \hat n + (1-p) \hat W_{\bm{\theta},\bm{x},\eta},
    \end{equation}
    where $\hat n$ is the photon number operator, and the corresponding threshold value  
    \begin{equation}
\label{eq:gen_threshold_value_def}
    v_{\bm{\theta},\bm{x},\eta}^{(p)}\coloneqq\inf_{\hat\sigma \in \mathcal S_{k-1}}  \Tr(\hat\sigma  \hat V_{\bm{\theta},\bm{x},\eta}^{(p)}).
    \end{equation}
    In this construction, the energy of the adversarial state will compete with its ability to minimize its probability mass in the windows around the zeros of the target state. 
    The advantage of this extended witness is that we can perform the optimization involved in determining the threshold value now over the set of $\mathcal S_{k-1}$ instead of $\mathcal S_{k-1}^E$, lifting the constraint on the energy of the states we optimize over.  
    This allows us to adopt the strategy from \cite{fiuravsek2022efficient} to determine the threshold value. 
    The insight we draw from this work is that one can rewrite the threshold value as an optimization of the minimal eigenvalue of a finite-dimensional matrix. Denoting by $\Pi_{k-1}= \sum_{n=0}^{k-1} \ketbra{n}{n}$ the projector onto the space spanned by the first $k$ Fock states, it is shown in \cite{fiuravsek2022efficient} that
    \begin{equation}
        v_{\bm{\theta},\bm{x},\eta}^{(p)} = \inf_{\alpha,\chi}\quad \min \text{eig} \bigl(\Pi_{k-1} \hat S^\dagger(\chi) \hat D^\dagger (\alpha) \hat V_{\bm{\theta},\bm{x},\eta}^{(p)} \hat D(\alpha) \hat S(\chi)\Pi_{k-1}\bigr).
    \end{equation}
When witnessing stellar rank $k$, the dimension of this matrix is $k$ and its diagonalization can be efficiently carried out numerically. 
In this altered approach, the number of parameters involved in the final optimization is independent of the stellar rank $k$ one seeks to certify, as opposed to our original approach where the number of parameters scales linearly with $k$. But there is a trade-off here: Reducing the number of parameters we optimize over increases the computation time of the cost function value at a fixed set of parameters. So it is not directly clear that the computation of $v_{\bm{\theta},\bm{x},\eta}^{(p)}$ is more efficient that the computation of the original threshold.
    
We remark that the extended witness requires measuring at least two conjugate quadratures in order to evaluate the expectation value of the witness operator in Eq.~(\ref{eq:gen_witness_def}). This clearly takes us beyond the realm of certifying stellar rank from single-quadrature measurements. However, as discussed in the main text, witnesses built from more than one quadrature will likely be necessary in practice regardless, due to the presence of noise.
To demonstrate the completeness of the extended witness numerically, we consider again the Fock state $\ket{1}$ as the target state to be certified against all Gaussian states. We take the original witness $\hat W_{\bm\theta,\bm x, \eta}$ built from windows around the zero at two conjugated quadratures. The measurement of those two conjugated quadratures is sufficient to estimate the expectation value of the full extended witness (which additionally involves an energy estimation).
The results for the respective threshold value and expected value of the extended witness in the Fock state $\ket{1}$ are plotted in Fig.~\ref{fig:extended_witness_num}.
\begin{figure}
    \centering
    \includegraphics[width=0.5\linewidth]{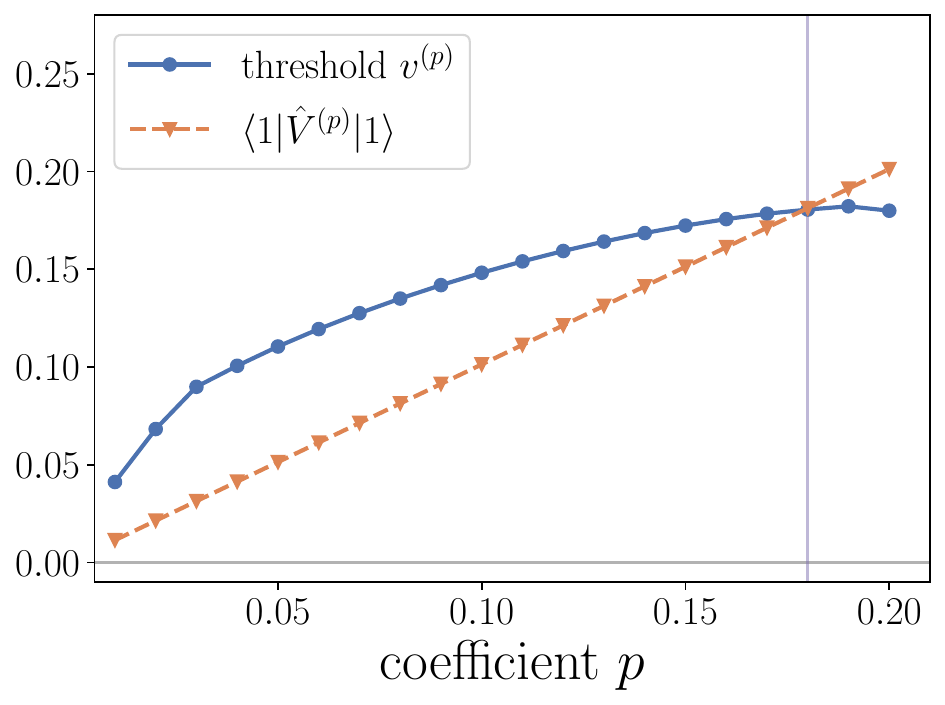}
    \caption{Numerical example for the extended witness operator $\hat V^{(p)}$ having the Fock state $\ket{1}$ as a target state. The original witness operator used is $\hat W_{0,0, 0.2}+ \hat W_{\tfrac \pi 2,0, 0.2}$, which is built from the projectors onto the window of width $\eta =0.2$ around the zero along two conjugated quadratures. Plotting the respective threshold value (blue, circles) and the expectation of the extended witness in the target state (orange, triangles) for different convex combinations, we see that for $p< 0.18$, the generalized witness can detect the non-Gaussianity of the Fock state $\ket{1}$.}
    \label{fig:extended_witness_num}
\end{figure}
We see that for $p< 0.18$, the extended witness can detect the non-Gaussianity of the Fock state $\ket{1}$.

\section{Witnessing higher stellar ranks}
\label{sec:witness_stellar}
  To certify stellar rank from measurements of a single quadrature against all states in $\mathcal S_{k-1}^E$, the procedure based on the witness in Definition~\ref{def:witness} can be extended if $k$ zeros are present in a quadrature distribution for some $\theta$. As before, we assume an energy promise on our target state. Our witness is then generalized to be the projector onto small windows around all the $k$ zeros of a quadrature distribution of a given target state. Consequently, the respective threshold is the infimum of the witness expectation value over all states in $\mathcal S_{k-1}^E$. In our companion paper \cite{zeros} we point out more in detail that the wavefunction of a stellar rank $r$ state has exactly $r$
  complex zeros. Consequently each state in $\mathcal S_{k-1}^E$ can have at most $k-1$ real zeros in its wavefunction. This can be used to show, along the same lines as in the proof of Theorems 4 and 5, that there exists in this more general case a non-trivial window size $\eta$ for the witness which ensures soundness and completeness of the witnessing strategy. Moreover, the violation of a witness threshold value also admits the operational interpretation as in Corollary 7: it lower bounds the trace distance between the measured state which is an element in $\mathcal S_{k}^E$ and all states in $\mathcal S_{k-1}^E$.

One can then ask, if arbitrary high stellar rank can in principle be certified from measurements of a single quadrature. 
Our results in the companion paper (reproduced as Theorem 1 in the main text of the present paper) state that pure states of infinite stellar rank satisfying the energy condition $\langle s^{\hat n}\rangle_\psi<+\infty$ for some $s>1$ have entire wave functions with infinitely many isolated complex zeros. 
In particular, this implies that our non-Gaussianity witnesses can be applied to witness arbitrarily high stellar rank for states whose wave function has infinitely many real zeros along some quadrature angle by picking windows around any number of such zeros. 
Cat states $\ket{\psi_\alpha} \propto \ket{\alpha} + \ket{-\alpha}$, with $\ket{\alpha}= \hat D(\alpha) \ket{0}$ being a displaced vacuum state, satisfy $\langle s^{\hat n}\rangle_{\psi_\alpha} = C_{\alpha} e^{-|\alpha|^2}\cosh(|\alpha|^2\sqrt{s})< +\infty$ for any fixed $s > 0$ and some normalization constant $C_\alpha$.   
The cat state's quadrature distribution along any angle $\theta$ is
\begin{equation}
    |\bra{\psi_\alpha} \!q\rangle_{\hat q_\theta}|^2 \propto e^{-(q-\sqrt{2}\Re(\alpha e^{-i\theta}))^2}+e^{-(q+\sqrt{2}\Re(\alpha e^{-i\theta}))^2}+ 2e^{-q^2-2\Re(\alpha e^{-i\theta})^2}\cos(2\sqrt{2}\Im(\alpha e^{-i\theta})q).
\end{equation}
We see from the cosine that the cat state has infinitely many real, isolated zeros along any quadrature $\hat q_\theta$ such that $\Im(\alpha e^{-i\theta}) \neq 0$. 
Thus, the cat state is a suitable target state for which arbitrarily high stellar rank can be witnessed by our strategy from measurements of a single quadrature.

\section{Sample complexity for multiple quadratures}
\label{sec:sampleComplexityManyQuads}

In this section we show a simple generalization of Theorem~8 in the main text to witnesses built using more than one quadratures. In that case, given points $\bm x=(x_1,\dots,x_n)$, bin sizes $\bm\eta=(\eta_1,\dots,\eta_N)$ and quadrature angles $\bm\theta=(\theta_1,\dots,\theta_N)$, the corresponding witness is given by
\begin{equation}\label{eq:witnessmultiquad}
    \hat{W}_{\bm{\theta},\bm{x},\bm{\eta}}\coloneqq\sum_{j=1}^N\hat{W}_{\theta_j,x_j,\eta_j}.
\end{equation}
Its Gaussian threshold value at energy $E$ is defined as
\begin{equation}\label{eq:witnessthreshmultiquad}
    w_{\bm{\theta},\bm{x},\bm{\eta}}^E\coloneqq\min_{\hat\sigma\in\mathcal S_0^E}\mathrm{Tr}\!\left[\hat\sigma\hat{W}_{\bm{\theta},\bm{x},\bm{\eta}}\right].
\end{equation}

Suppose $M=M_1+\dots+M_N$ copies of a state $\hat\sigma$ are available. The witnessing protocol is then as follows: for all $j=1,\dots,N$, measure $\hat q_{\theta_j}$ for $M_j$ copies of $\hat\sigma$, and compute the fraction $\bar{w}_{\hat\sigma,j}$ of outcomes falling in the range $[x_j-\frac{\eta_j}2,x_j+\frac{\eta_j}2]$. Then, compute the experimental violation of the threshold value
\begin{equation}
    w_{\bm{\theta},\bm{x},\bm{\eta}}^E-\sum_{j=1}^N\bar{w}_{\hat\sigma,j},
\end{equation}
and declare the state $\hat\sigma$ as non-Gaussian if this  quantity is positive. Due to statistical uncertainty, if the violation or the number of samples is too small, a Gaussian state can still pass this test with some nonzero probability.
The following result bounds the failure probability of this witnessing protocol, i.e.\ the probability of incorrectly labeling the state $\hat\sigma$ as non-Gaussian when it is Gaussian.

\begin{theo}[Sample complexity for multiple-quadrature witnesses]\label{thm:sample_complexity_multi}
    Consider a witness $\hat{W}_{\bm{\theta},\bm{x},\bm{\eta}}$ as in Eq.~\eqref{eq:witnessmultiquad} with threshold $w_{\bm{\theta},\bm{x},\bm{\eta}}^E$ at energy $E$. Suppose that the state being measured is in $\mathcal{S}_0^E$ (i.e.\ it is Gaussian with energy bounded by $E$). The probability $p_\mathrm{fail}$ that the estimator derived from $M_1+M_2+\ldots+M_N$ measurement outcomes for the quadrature angles $\theta_1,\theta_2,\ldots,\theta_N$ will be smaller than a fixed $w_{\bm{\theta},\bm{x},\bm{\eta}}^E-\epsilon$ with $\epsilon>0$ satisfies
    \begin{equation}\label{eq:pfail_multi}
        p_\mathrm{fail}\leq \sum_j\exp\left( -2M_j\left(\frac{\epsilon}{N}\right) ^2\right).
    \end{equation}
\end{theo}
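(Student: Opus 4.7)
My plan is to reduce the multi-quadrature bound to $N$ applications of the single-quadrature argument from Theorem~\ref{th:sampleComplexity1}, combined via a pigeonhole-style event decomposition and a union bound. First I would introduce, in exact analogy with the single-quadrature proof, the per-quadrature estimator $\bar{w}_{\hat\sigma,j}$ defined as the empirical fraction of the $M_j$ outcomes of $\hat q_{\theta_j}$ falling in the window $[x_j-\eta_j/2,\,x_j+\eta_j/2]$. Each $\bar{w}_{\hat\sigma,j}$ is then an average of $M_j$ i.i.d.\ $[0,1]$-valued indicator random variables with mean $\mathrm{Tr}[\hat\sigma \hat W_{x_j,\eta_j,\theta_j}]$, and by the additivity~\eqref{eq:witnessmultiquad} of the witness the total estimator $\bar{w}_{\hat\sigma} \coloneqq \sum_{j=1}^N \bar{w}_{\hat\sigma,j}$ has expectation $\mathrm{Tr}[\hat\sigma \hat W_{\bm{x},\bm{\eta},\bm{\theta}}] \geq w_{\bm{x},\bm{\eta},\bm{\theta}}^E$ for any $\hat\sigma \in \mathcal S_0^E$, by~\eqref{eq:witnessthreshmultiquad}.

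The central step is to observe that the failure event $\{\bar{w}_{\hat\sigma} < w_{\bm{x},\bm{\eta},\bm{\theta}}^E - \epsilon\}$ implies $\mathbb{E}[\bar{w}_{\hat\sigma}] - \bar{w}_{\hat\sigma} > \epsilon$, hence $\sum_j(\mathbb{E}[\bar{w}_{\hat\sigma,j}] - \bar{w}_{\hat\sigma,j}) > \epsilon$, so by pigeonhole at least one index $j$ must satisfy $\bar{w}_{\hat\sigma,j} < \mathbb{E}[\bar{w}_{\hat\sigma,j}] - \epsilon/N$. This containment of the failure event in a union of $N$ single-quadrature deviation events is really the only nontrivial idea in the argument.

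Finally, a union bound together with Hoeffding's inequality applied separately to each $\bar{w}_{\hat\sigma,j}$ --- with deviation $\epsilon/N$ and $M_j$ samples of range $[0,1]$ --- yields exactly
\begin{equation}
p_\mathrm{fail} \leq \sum_{j=1}^N \mathrm{Pr}\left(\bar{w}_{\hat\sigma,j} < \mathbb{E}[\bar{w}_{\hat\sigma,j}] - \frac{\epsilon}{N}\right) \leq \sum_{j=1}^N \exp\left(-2M_j\left(\frac{\epsilon}{N}\right)^2\right).
\end{equation}
There is no real technical obstacle; the whole argument is a boilerplate concentration-of-measure decomposition. The only place where one might hope to sharpen the bound is the factor $1/N^2$ inside each exponent, which originates in the pigeonhole split: applying Hoeffding directly to the pooled collection of $\sum_j M_j$ (appropriately weighted) independent indicators bypasses this split and, for balanced $M_j$, gives a tighter constant --- but I would stick with the simpler decomposition to mirror the style of Theorem~\ref{th:sampleComplexity1}.
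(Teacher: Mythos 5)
Your proposal is correct and follows essentially the same route as the paper's proof: decompose the witness additively, observe via pigeonhole that a total deviation of $\epsilon$ forces some single-quadrature deviation of $\epsilon/N$, then apply a union bound and Hoeffding's inequality batch by batch (which, as you implicitly do, correctly sidesteps the fact that samples across different quadrature batches need not be jointly i.i.d.). No gaps.
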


\begin{proof}
    The proof is similar to that of Theorem 8 with the difference that we apply Hoeffding's inequality separately to the samples obtained from measuring different quadratures. To do so, we consider 
    \begin{equation}
        \mathrm{Tr}\left[\hat\sigma \hat{W}_{\bm{\theta},\bm{x},\bm{\eta},}  \right] = \sum_j  \mathrm{Tr}\left[\hat\sigma \hat{W}_{\theta_j,x_j,\eta_j}  \right].
    \end{equation} 
    We can estimate this quantity in an experiment by estimating each term on the right hand side individually, using $M_1$ i.i.d.\ samples for the quadrature at angle $\theta_1$, $M_2$ at angle $\theta_2$, and so on. We denote the corresponding estimates by $\bar{w}_{\hat\sigma,j}\in[0,1]$ for $j=1,\dots,N$. While the samples in each batch are i.i.d., this is not necessarily the case between different batches, so Hoeffding's inquality does not directly apply. However, note that
    \begin{align}
        \mathrm{Pr}\left[\mathrm{Tr}\left[\hat\sigma \hat{W}_{\bm{\theta},\bm{x},\bm{\eta}}\right]-\sum_{j=1}^N\bar{w}_{\hat\sigma,j}>\epsilon\right]&=\mathrm{Pr}\left[\sum_{j=1}^N\left(\mathrm{Tr}\left[\hat\sigma \hat{W}_{\theta_j,x_j,\eta_j}\right]-\bar{w}_{\hat\sigma,j}\right)>\epsilon\right]\\
        &\leq\mathrm{Pr}\left[\exists j,\mathrm{Tr}\left[\hat\sigma \hat{W}_{\theta_j,x_j,\eta_j}\right]-\bar{w}_{\hat\sigma,j}>\frac{\epsilon}N\right]\\
        &\leq\sum_{j=1}^N\mathrm{Pr}\left[\mathrm{Tr}\left[\hat\sigma \hat{W}_{\theta_j,x_j,\eta_j}\right]-\bar{w}_{\hat\sigma,j}>\frac{\epsilon}N\right],
    \end{align}
    where we used the union bound in the last line.
    Applying Hoeffding's inequality to upper bound each probability on the right hand side of the last expression we obtain
    \begin{equation}\label{eq:HoeffdingNquad}
        \mathrm{Pr}\left[\mathrm{Tr}\left[\hat\sigma \hat{W}_{\bm{\theta},\bm{x},\bm{\eta}}\right]-\sum_{j=1}^N\bar{w}_{\sigma,j}>\epsilon\right]\le\sum_{j=1}^N\exp\left(-2M_j\left(\frac{\epsilon}{N}\right)^2\right),
    \end{equation}
    valid for all states $\hat\sigma$.

    Let us now assume that $\hat\sigma\in\mathcal S_0^E$. In that case,
    \begin{equation}
        \mathrm{Tr}\left[\hat\sigma \hat{W}_{\bm{\theta},\bm{x},\bm{\eta}}  \right]\ge w_{\bm{\theta},\bm{x},\bm{\eta}}^E,
    \end{equation}
    by definition of the witness threshold value. Hence, with Eq.~\eqref{eq:HoeffdingNquad}, we obtain
    \begin{align}
        p_\mathrm{fail}&=\mathrm{Pr}\left[\sum_{j=1}^N\bar{w}_{\hat\sigma,j} < w_{\bm{\theta},\bm{x},\bm{\eta}}^E- \epsilon\right]\\
        &\le\mathrm{Pr}\left[\mathrm{Tr}\left[\hat\sigma \hat{W}_{\bm{\theta},\bm{x},\bm{\eta}}\right]-\sum_{j=1}^N\bar{w}_{\hat\sigma,j}>\epsilon\right]\\
        &\le\sum_{j=1}^N\exp\left(-2M_j\left(\frac{\epsilon}{N}\right)^2\right).
    \end{align}
\end{proof} 

Note that the operator in Eq.~(\ref{eq:witnessmultiquad}) is not a POVM element since in general $\hat{W}_{\bm{\theta},\bm{x},\bm{\eta}} \npreceq \mathbb{1}$, but this can be achieved by dividing it by $N$. We chose the former convention to simplify the proof (in particular the application of Hoeffding's inequality) but normalizing to a POVM it is also possible to extend the operational meaning of Lemma~7 in the main text to the general multi-quadrature setting. 

\section{Heuristic approach to the choice of the optimal bin from measurement data}
\label{sec:heuristic_witness_construction}

\begin{figure*}[t]
\centering
\includegraphics[width=\textwidth]{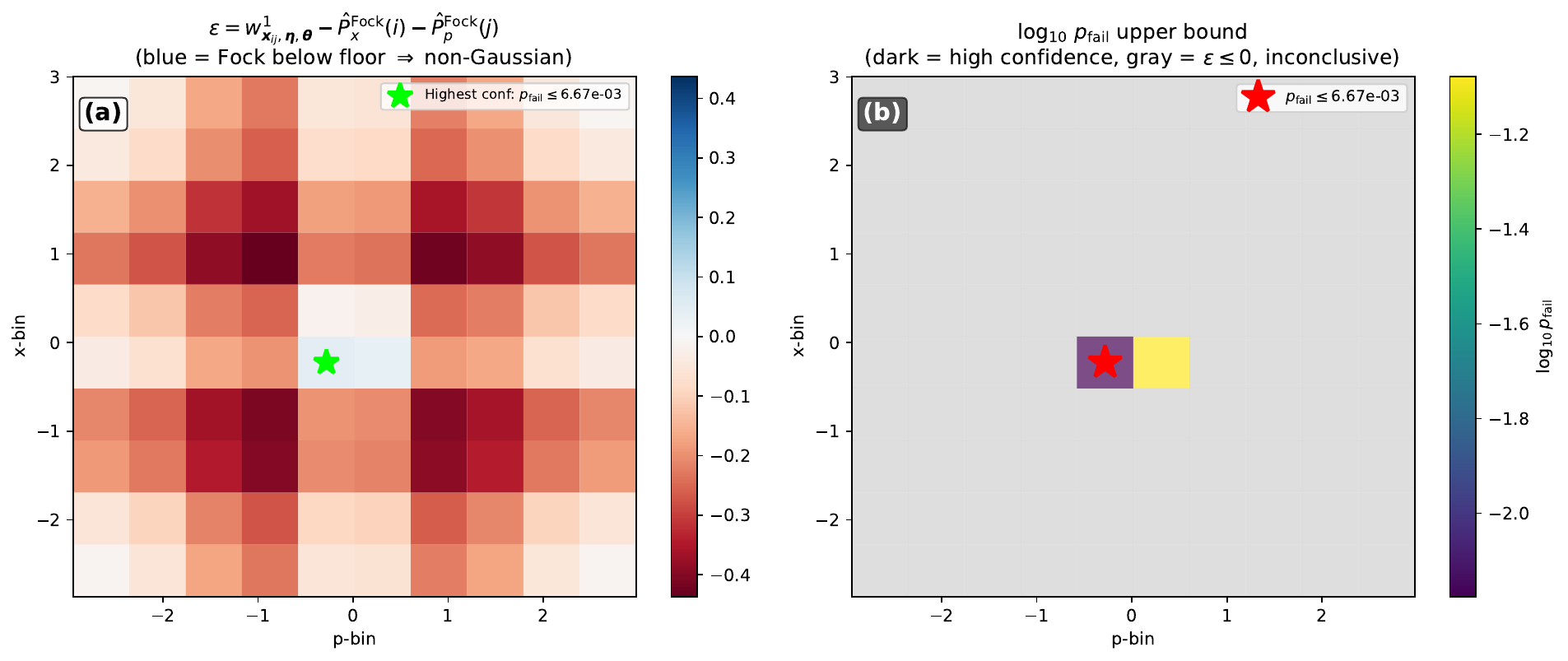}
\caption{%
Non-Gaussianity witness for Fock state $\ket{1}$ versus all Gaussian
states with $\langle \hat{n}\rangle \leq 1$. We consider witness operators $\hat{W}_{ij} = \hat{W}_{\bm{\theta},\boldsymbol{x}_{ij},\boldsymbol{\eta}}$ as defined in Eq.~\eqref{eq:witnessmultiquad} with $\boldsymbol{x}_{ij} = (x_i,p_j)$  spanning a two dimensional grid (10 bins per quadrature), a constant width $\boldsymbol{\eta} = \left(\eta,\eta\right) $, so that squares tessellate the observed quadrature ranges, and $\boldsymbol{\theta} = \left(0,\pi/2\right)$.
(a)~Difference $\epsilon = w^1_{\boldsymbol{x}_{ij},\boldsymbol{\eta},\boldsymbol{\theta}} - \hat{P}_x^{\mathrm{Fock}}(i) - \hat{P}_p^{\mathrm{Fock}}(j)
$ across all bin pairs $(i,j)$; Here $\hat{P}_x^{\mathrm{Fock}}(i)$ is the frequency of outcomes falling within the position bin centered at $x_i$ in a batch of $M_x = M_p = M = 5000$ position measurements on independent input states, assumed to be single-photon states, and similarly for momentum; 
blue regions indicate the estimated values for the witness on a Fock state falling \emph{below} the Gaussian
floor. (b)~Upper bound $p_\mathrm{fail}\leq e^{-2M_x (\epsilon/2)^2} + e^{-2M_p (\epsilon/2)^2}$ on a
$\log_{10}$ scale; darker regions correspond to smaller values (stronger
evidence); gray cells have $\epsilon \leq 0$ (no witness violation).  The red
star marks the best (smallest) bound on $p_\mathrm{fail}$.
}
\label{fig:witness_selection}
\end{figure*}

In order to construct our quadrature witnesses we need to assume some prior knowledge of the state, namely which quadratures to measure to observe a zero and how to bin the outcomes of these quadratures' measurements so that the corresponding witness shows a violation. Here we outline a heuristic approach to choose the latter depending on measurement data, provided it is known that the state to certify has zeros in the measured quadratures. This approach is guaranteed to succeed when there is a zero in the considered interval of quadrature values thanks to the state-independent theoretical guarantees in \cref{th:state_indpt_guar} proven in \cref{sec:theo_guar_bin_size}.

To be concrete, we consider the example of witnessing quantum non-Gaussianity (as opposed to witnessing higher stellar rank). As noted above, in general some knowledge of the state is needed to choose the quadrature(s) to measure in order to reveal that the corresponding distribution has a zero, but let us consider for simplicity an experiment to certify a phase-invariant state, making the choice of quadratures irrelevant \footnote{If needed, this symmetry can be tested with a third batch of measurements.}, and a witness defined using two quadratures: position and momentum.

One may divide the samples for each quadrature in two batches: the first batch is used to reconstruct the quadrature density and produce a histogram; this is used to choose the window location and size around a dip in the reconstructed probability density; finally, the second batch is used to (independently) estimate the corresponding witness value.

The two phases of the experiment would run as follows. 

\textbf{Phase 1.} In the first phase, a certain number of copies of the input state are used to sample from the selected quadratures. The quadrature samples are then used to construct histograms with a pre-determined number of bins, say $B$ for each quadrature. The ranges for the histograms can be chosen to match the range of the samples for the respective quadratures. The observed frequencies for each pair of bins $\left(\Delta_{q,j},\Delta_{p,j}\right)$ are then used to estimate the value of the witness corresponding to that bin. This value is compared with the minimum value that can be attained by a Gaussian. The latter step is the most computationally expensive, as it entails a separate optimization for each of the $B^2$ bins. In our implementation~\cite{data_plotting} we use grid search with a successive refinement, which takes approximately $10$ seconds for $B\approx 100$ on a laptop.

\textbf{Phase 2.} Based on the previous phase, the bin with the best violation confidence is chosen, and more quadrature measurements are sampled from fresh copies of the state to be certified. Sampling the quadratures on new copies ensures statistical independence, since using the same data to select the witness and evaluate it might introduce bias undermining the certification (in particular the confidence bounds in \cref{thm:sample_complexity_multi}).

The results of an example run of phase 1 with a (pure) single-photon state are displayed in \cref{fig:witness_selection}. The main figure of merit for the non-Gaussianity certification is the upper bound in Eq.~\eqref{eq:pfail_multi}. \cref{fig:confidence_convergence} shows how this bound improves with larger number of samples in phase 2. In particular, this analysis reveals that a moderate number of measurements are sufficient to witness non-Gaussianity with high confidence (around $10^4$ samples for $99\%$ confidence).
Importantly, while prior knowledge about the tested state can significantly improve the success rate of our witnesses, a full theoretical model of the tested state is not strictly necessary.


\begin{figure}[t]
\centering
\includegraphics[width=\columnwidth]{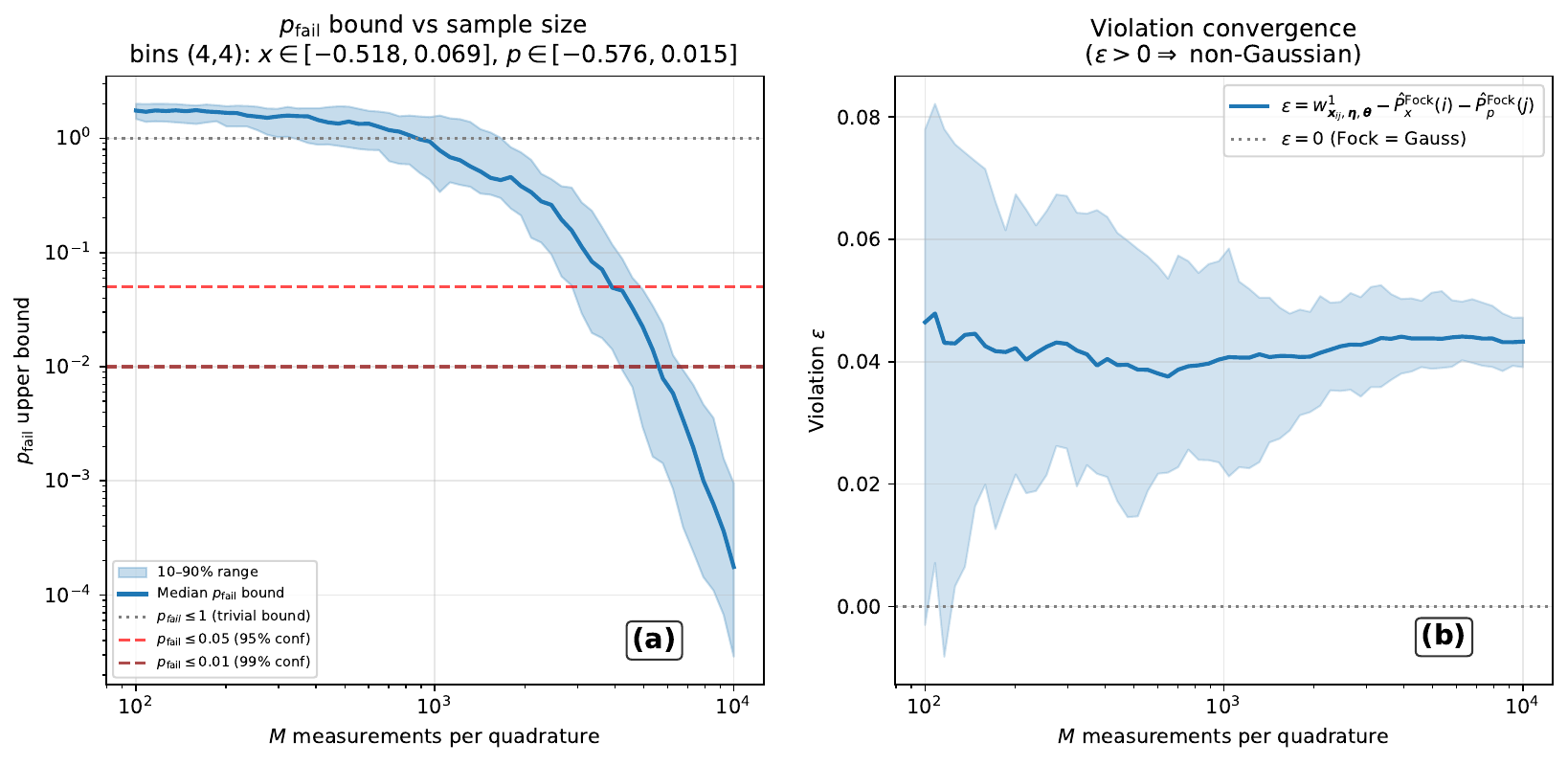}
\caption{%
Convergence of the non-Gaussianity witness with measurement number for
the optimal bin pair identified in Fig.~\ref{fig:witness_selection}.
(a)~Upper bound to the failure probability $p_\mathrm{fail}\leq e^{-2M_x (\epsilon/2)^2} + e^{-2M_p (\epsilon/2)^2}$
as a function of the number of homodyne measurements $M$ per quadrature
($M_x = M_p = M$).  The solid line shows the median over 20 independent
repetitions.  Horizontal dashed lines
indicate $p_\mathrm{fail} \leq 1 $ (no evidence), $p_\mathrm{fail} \leq  0.05$, and $p_\mathrm{fail} \leq 0.01$.
(b)~Witness violation $\epsilon = w^1_{\boldsymbol{x}_{ij},\boldsymbol{\eta},\boldsymbol{\theta}} - \hat{P}_x^{\mathrm{Fock}}(i) - \hat{P}_p^{\mathrm{Fock}}(j)$ stabilizing at positive
values as $M$ increases, demonstrating that the Fock state genuinely
violates the Gaussian bound.
}
\label{fig:confidence_convergence}
\end{figure}

\end{document}